\newtheorem{theorem}{Theorem}
\newtheorem{lemma}{Lemma}
\patchcmd{\algorithmic}{\addtolength{\ALC@tlm}{\leftmargin} }{\addtolength{\ALC@tlm}{\leftmargin}}{}{}
\begin{document}
\title{Interval Selection in Sliding Windows}

\date{}

\author{Cezar-Mihail Alexandru\footnote{Supported by EPSRC Doctoral Training Studentship EP/T517872/1.} \\ School of Computer Science, University of Bristol, Bristol, UK \\ ca17021@bristol.ac.uk\\
\and Christian Konrad\footnote{Supported by EPSRC New Investigator Award EP/V010611/1.} \\
School of Computer Science, University of Bristol, Bristol, UK \\
christian.konrad@bristol.ac.uk}

\newcommand{\Exp}{{\mathbb{E}}}
\newcommand{\Expmath}{\mathop{\mathbb{E}}}
\newcommand{\Order}{\mathrm{O}}
\newcommand{\OrderT}{\tilde{\mathrm{O}}}
\newcommand{\OmegaT}{\tilde{\mathrm{\Omega}}}
\newcommand{\ThetaT}{\tilde{\Theta}}
\newcommand{\polylog}{\mathop{\mathrm{polylog}}\nolimits}
\newcommand{\poly}{\mathop{\mathrm{poly}}\nolimits}

\baselineskip=1.1\baselineskip

\maketitle
\allowdisplaybreaks

\thispagestyle{empty}

\begin{abstract}
     We initiate the study of the \textsf{Interval Selection} problem in the (streaming) sliding window model of computation. 
     In this problem, an algorithm receives a potentially infinite stream of intervals on the line, and the objective is to maintain at every moment an approximation to a largest possible subset of disjoint intervals among the $L$ most recent intervals, for some integer $L$.

     We give the following results:
     \begin{enumerate}
         \item In the unit-length intervals case, we give a $2$-approximation sliding window algorithm with space $\OrderT(|OPT|)$, and we show that any sliding window algorithm that computes a $(2-\varepsilon)$-approximation requires space $\Omega(L)$, for any $\varepsilon > 0$.
         \item In the arbitrary-length case, we give a $(\frac{11}{3}+\varepsilon)$-approximation sliding window algorithm with space $\OrderT(|OPT|)$, for any constant $\varepsilon > 0$, which constitutes our main result.\footnote{We use the notation $\tilde{O}(.)$ to mean  $O(.)$ where $\polylog$ factors and dependencies on $\varepsilon$ are suppressed.} 
         We also show that space $\Omega(L)$ is needed for algorithms that compute a $(2.5-\varepsilon)$-approximation, for any $\varepsilon > 0$.
     \end{enumerate}

     Our main technical contribution is an improvement over the smooth histogram technique, which consists of running independent copies of a traditional streaming algorithm with different start times. By employing the one-pass $2$-approximation streaming algorithm by Cabello and P\'{e}rez-Lantero [Theor. Comput. Sci. '17] for \textsf{Interval Selection}  on arbitrary-length intervals as the underlying algorithm, the smooth histogram technique immediately yields a $(4+\varepsilon)$-approximation in this setting. Our improvement is obtained by forwarding the structure of the intervals identified in a run to the subsequent run, which constrains the {\em shape} of an optimal solution and allows us to target optimal intervals differently. 
\end{abstract}

\newpage

\pagenumbering{arabic}

\section{Introduction}
\subparagraph*{Sliding Window Model} The {\em sliding window model} of computation introduced by Datar et al. \cite{dgim02} captures many of the challenges that arise when processing infinite data streams. In this model, an algorithm receives an infinite stream of data items and is required to maintain, at every moment, a solution to a given problem on the current {\em sliding window}, i.e., on the $L$ most recent data items, for an integer $L$. The objective is to design algorithms that use much less space than the size of the sliding window $L$. 

Many modern data sources are best modelled as infinite data streams rather than as data sets of large but finite sizes. For example, the sequence of Tweets on X (formerly Twitter), the sequence of IP packages passing through a network router, and continuous sensor measurements for monitoring the physical world are a priori unending. Such data sets typically constitute time-series data, where the resulting data stream is ordered with respect to the data items' creation times. When processing such streams, it is reasonable to focus on the most recent data items (as it is modelled in the sliding window model by the sliding window size $L$) since the near past usually affects the present more strongly than older data.

The sliding window model should be contrasted with the more traditional  {\em one-pass data streaming model}. In the data streaming model, an algorithm processes a finite stream of $n$ data items and is tasked with producing a single output once all items have been processed. Similar to the sliding window model, the objective is to design algorithms that use as little space as possible, in particular, sublinear in the length of the stream. Since sliding window algorithms with $L = n$ can immediately be used in the data streaming model, problems are generally harder to solve in the sliding window model.

\subparagraph*{\textsf{Interval Selection} Problem}
In this work, we initiate the study of the \textsf{Interval Selection} problem in the sliding window model. Given a set $\mathcal{S}$ of $n$ intervals on the real line, the objective is to find a subset $\mathcal{I} \subseteq \mathcal{S}$ of pairwise non-overlapping intervals of maximum cardinality. The problem can also be regarded as the \textsf{Maximum Independent Set} problem in the interval graph associated with the intervals $\mathcal{S}$. We consider both the unit-length case, where all intervals are of length $1$, and the arbitrary-length case, where no restriction on the lengths of the intervals is imposed. 

\textsf{Interval Selection} is fully understood in the one-pass streaming model. Emek et al. \cite{ehr16} gave a $\frac{3}{2}$-approximation streaming algorithm for unit-length intervals and a $2$-approximation streaming algorithm for arbitrary-length intervals. Both algorithms use space $O(|OPT|)$, where $OPT$ denotes an optimal solution, assuming that the space required for storing an interval is $O(1)$. Emek et al. also gave matching lower bounds, showing that, for both the unit-length and the arbitrary-length case, slightly better approximations require space $\Omega(n)$. Subsequently, Cabello and P\'{e}rez-Lantero \cite{cp17} also gave algorithms for the unit-length and the arbitrary-length cases that match the guarantees of those by Emek et al. but are significantly simpler. We will reuse one of the algorithms by Cabello and P\'{e}rez-Lantero in this paper. Last, weighted intervals as well as the insertion-deletion setting, where previously inserted intervals can be deleted again, have also  been considered \cite{ddk23, Bakshi2019WeightedMI}, where \cite{Bakshi2019WeightedMI} addresses the challenge of outputting the size or weight of a largest/heaviest independent set rather than outputting the intervals themselves.

\subparagraph*{The Smooth Histogram Technique} 
Braverman and Ostrovsky \cite{bO07} introduced the smooth histogram technique, which allows deriving sliding window algorithms from traditional streaming algorithms at the expense of slightly increased space requirements and approximation guarantees. The method works as follows. Given a streaming algorithm $\mathcal{A}$ for a specific problem $\textsf{P}$ that fulfills certain smoothness properties (see \cite{bO07} for details), multiple copies of $\mathcal{A}$ are run with different starting positions in the stream. The runs are such that consecutive runs differ only slightly in solution quality, and, thus, when a run expires due to the fact that its starting position fell out of the current sliding window, the subsequent run can be used to still yield an acceptable solution. The smooth histogram technique can be applied to the \textsf{Interval Selection} algorithms by Emek et al. \cite{ehr16} and by Cabello and P\'{e}rez-Lantero \cite{cp17}, and we immediately obtain sliding window algorithms for both the unit-length and the arbitrary-length cases using space $\tilde{O}(|OPT|)$\footnote{We use the notation $\tilde{O}(.)$ to mean  $O(.)$ where $\polylog$ factors and dependencies on $\varepsilon$ are suppressed.}. For unit-length intervals, the resulting approximation factor is $3+\varepsilon$, for any $\varepsilon > 0$, and for arbitrary-length intervals, the approximation factor is $4+\varepsilon$, for any $\varepsilon > 0$. We will provide the analysis of the $(4+\varepsilon)$-approximation for arbitrary-length intervals in this paper (\textbf{Theorem~\ref{thm:smooth-histogram}}) since it forms the basis of the analysis of one of our algorithms. 

\subparagraph*{Our Results} In this work, we show that it is possible to improve upon the guarantees obtained from the smooth histogram technique.
We give deterministic sliding window algorithms and lower bounds that also apply to randomized algorithms for \textsf{Interval Selection} for both the unit-length and arbitrary-length cases. Our algorithms use space $\tilde{O}(|OPT|)$ at any moment during the processing of the stream, where $OPT$ denotes an optimal solution in the current sliding window. Observe that $OPT$ may vary throughout the processing of the stream, and, thus, the space used by our algorithms may therefore also change accordingly.

Regarding unit-length intervals, we give a $2$-approximation sliding window algorithm using $O(|OPT|)$ space, and we prove that any sliding window algorithm with an approximation guarantee of $2-\varepsilon$, for any $\varepsilon > 0$, requires space $\Omega(L)$. Recall that, in the streaming model, a $\frac{3}{2}$-approximation can be achieved with space $O(|OPT|)$. Our lower bound thus establishes a separation between the sliding window and the streaming models for unit-length intervals.

In the arbitrary-length case, we give a $(\frac{11}{3}+\varepsilon)$-approximation sliding window algorithm with space $\OrderT(|OPT|)$, improving over the smooth histogram technique, which constitutes our main and most technical result. We also prove that any $(\frac{5}{2} - \varepsilon)$-approximation algorithm, for any $\varepsilon > 0$, requires space $\Omega(L)$. Since, in the streaming model, a $2$-approximation can be achieved with space $O(|OPT|)$, our lower bound also establishes a separation between the sliding window and the streaming models in the arbitrary-length case. 

    We summarize and contrast our results with results from the streaming model in Figure~\ref{fig:results}.

\begin{figure}[ht]
    \begin{tabular}{l|cc|cc}
     & \multicolumn{2}{c|}{\textbf{Streaming model} \cite{ehr16,cp17}}  & \multicolumn{2}{c}{\textbf{Sliding window model} (this paper)}  \\
    & Algorithm  & LB & Algorithm & LB \\
    \hline 
 Unit-length Intervals & $\frac{3}{2}$ & $\frac{3}{2} - \varepsilon$ & $2$ (\textbf{Thm~\ref{thm:ub-unit-length}}) & $2 - \varepsilon$ (\textbf{Thm~\ref{thm:lb-unit-length}}) \\
 Arbitrary-length Intervals & $2$ & $2 - \varepsilon$ & $\frac{11}{3}$ (\textbf{Thm~\ref{thm:ub-arbitrary-length}}) & $\frac{5}{2} - \varepsilon$ (\textbf{Thm~\ref{thm:lb-arbitrary-length}})
 \end{tabular}
 \caption{Approximation factors achievable in the streaming and sliding window models. All algorithms use space $\OrderT(|OPT|)$, while all lower bound results are to be interpreted in that achieving the stated approximation guarantee requires space $\Omega(n)$ (streaming) or $\Omega(L)$ (sliding window model). The lower bound results hold for any $\varepsilon > 0$. \label{fig:results}}
\end{figure}

\subparagraph{A Lack of Lower Bounds in the Sliding Window Model} Interestingly, to the best of our knowledge, we are aware of only a single lower bound result for graphs in the sliding window model published before our work. Crouch et al. \cite{cms13} proved an $\Omega(L)$ space lower bound for \textsf{Minimum Spanning Forest} in the sliding window model, while this problem can be solved with space $\OrderT(n)$ in the streaming model.

Our work establishes such separation results for both the unweighted and the weighted versions of \textsf{Interval Selection}. While our results for arbitrary-length intervals are not tight, we stress that for most problems considered, including  \textsf{Maximum Matching} and \textsf{Minimum Vertex Cover}, no tight bounds are known. It is unclear whether this is due to a lack of techniques for improved algorithms or for stronger lower bounds.

%(recall that the interval selection problem is an independent set problem on interval graphs) no separation result between the one-pass streaming and the sliding window models are known. In particular, we are not aware of any space lower bounds for graph problems specifically designed for the sliding window setting, and the only lower bounds that apply are those that carry over from the one-pass streaming setting. Our work is thus the first to establish such a separation. While our results for arbitrary-length intervals are not tight, we stress that for most problems considered, including  \textsf{Maximum Matching} and \textsf{Minimum Vertex Cover}, no tight bounds are known. It is unclear whether this is due to a lack of techniques for improved algorithms or for stronger lower bounds.

\subparagraph{Techniques}
We will first discuss the key ideas behind our results for unit-length intervals, and then discuss our results for arbitrary-length intervals.

\textit{Unit-length Intervals.} Our algorithm for unit-length intervals is surprisingly simple yet optimal, as established by our lower bound result. For each integer $r$, maintain the latest interval within the current sliding window whose left endpoint lies in the interval $[r, r+1)$ if there is one. We argue that, if at any moment, the algorithm stores $D$ intervals, then we can extract an independent set of size at least $D/2$ by considering either only the intervals $[r, r+1)$ where $r$ is odd or where $r$ is even, while $OPT$ is bounded by $D/2 \le OPT \le D$, which establishes both the approximation factor of $2$ and the space requirements. We note that the idea of considering either only the odd or even intervals for obtaining a $2$-approximation was previously used by \cite{Bakshi2019WeightedMI}.

Our lower bound for unit-length intervals is obtained by a reduction to the $\textsf{Index}_n$ problem in the one-way two-party communication setting. 
In this setting, there are two parties, denoted Alice and Bob. 
Each party holds a portion of the input data. Alice sends a single message to Bob, who then outputs the result of the computation. The objective is to solve a problem using a message of smallest possible size. In $\textsf{Index}_n$, Alice holds a bit-string $X \in \{0, 1\}^n$, and Bob holds an index $J \in [n]$, where $[n] = \{1,2,3,...,n\}$, and the objective for Bob is to report the bit $X[J]$. It is well-known that a message of size $\Omega(n)$ is needed to solve the problem. 

We argue that a sliding window algorithm $\mathcal{A}$ for \textsf{Interval Selection} on unit-length intervals with approximation guarantee slightly below $2$ can be used to solve $\textsf{Index}_{\Theta(L)}$. To this end, Alice translates the bit-string $X$ into a {\em clique gadget}, i.e., a stack of overlapping $\Theta(L)$ interval slots that are slightly shifted from left-to-right, where interval $i$ is present in the stack if and only if $X[i] = 1$. Clique gadgets have been used in all previous space lower bound constructions for intervals \cite{ehr16,Bakshi2019WeightedMI,ddk23}. Alice then runs $\mathcal{A}$ on these intervals and sends the memory state of $\mathcal{A}$ to Bob. Bob subsequently feeds an interval located slightly to the right of the slot of interval $J$ into the execution of $\mathcal{A}$ such that Bob's interval overlaps with all interval slots at positions $\ge J+1$ and does not overlap with all interval slots at positions $\le J$. The key idea of this reduction is that, since $\mathcal{A}$ is a sliding window algorithm, it must be able to report a valid solution even if any prefix of intervals of the stack are deleted/have expired. Consider thus the situation when the intervals that are located in the first $J-1$ slots have expired. Then, the resulting instance has an independent set of size $2$ if and only if $X[J] = 1$, otherwise a largest independent set is of size $1$. Since the approximation factor of $\mathcal{A}$ is below $2$, $\mathcal{A}$ can thus distinguish between the two cases and solve $\textsf{Index}_{\Theta(L)}$. Since Alice only sent the memory state of $\mathcal{A}$ to Bob, we also obtain a space lower bound for $\mathcal{A}$. While this description covers the key idea of our lower bound, we note that our actual construction is slightly more involved due to an additional technical challenge. See proof of Theorem~\ref{thm:lb-unit-length} for details. 

Our lower bound construction shares similarities with the lower bounds by \cite{Bakshi2019WeightedMI} and \cite{ehr16}, as both of these lower bounds also work with clique gadgets and special intervals that render a specific interval in the clique gadget important. In \cite{Bakshi2019WeightedMI}, a reduction to the \textsf{Augmented-Index} problem is given in order to obtain a space lower bound for the {\em dynamic streaming setting}, where previously inserted intervals can be deleted again at any moment. In \textsf{Augmented-Index}, besides the index $J$, Bob also holds the prefix $X[1, \dots, J-1]$. While in our setting, intervals are deleted due to the shifting sliding window, in their lower bound, intervals are explicitly deleted by Bob.

\textit{Arbitrary-length Intervals.} Our algorithm and our lower bound for arbitrary-length intervals are substantially more involved, and our $(\frac{11}{3}+\varepsilon)$-approximation algorithm constitutes the main technical result of this paper. 

Our algorithm constitutes an improvement over the smooth histogram method. Using the one-pass $2$-approximation streaming algorithm for arbitrary-length intervals by Cabello and P\'{e}rez-Lantero \cite{cp17}, which we abbreviate by $\mathcal{CP}$, as the base algorithm of the smooth histogram method, we immediately obtain a $(4+\varepsilon)$-approximation sliding window algorithm using $\OrderT(|OPT|)$ space. 
The key idea of the method is to maintain various runs of $\mathcal{CP}$ with different starting times that are sufficiently spaced out so that only a logarithmic number of runs are needed, yet adjacent runs still have similar output sizes. Then, when a run expires, the subsequent run can still be used to report a good enough solution.

We observe that the executions of $\mathcal{CP}$ in the smooth histogram method are independent. Our key contribution that gives rise to our improvement is to forward the structure identified in a run of $\mathcal{CP}$ to the subsequent run. 
The $\mathcal{CP}$ algorithm, which we will discuss in detail in Section~\ref{sec:cp}, maintains a partition of the real line that restrains the possible locations of optimal intervals that are yet to arrive in the stream.  
We target these locations individually in the subsequent run by initiating additional runs of $\mathcal{CP}$ on restricted domains where we expect to find many of these optimal intervals. 

Our approach relies on a property of the $\mathcal{CP}$ algorithm that, at first glance, seems relatively insignificant. As proved by Cabello and P\'{e}rez-Lantero, the $\mathcal{CP}$ algorithm produces a solution of size at least $(|OPT| + 1)/2$, and thus only has an approximation factor of $2$ in an asymptotic sense. Consequently, if $OPT$ is a small constant then the algorithm achieves an approximation factor strictly below $2$. We exploit this property in that we execute the additional runs of $\mathcal{CP}$ on small domains where we expect to find only a small constant number of optimal intervals, see Section~\ref{sec:alg-arbitrary-length} for further details.

Our $(2.5 - \varepsilon)$-approximation lower bound for arbitrary-length intervals is also achieved via a reduction to a hard problem in one-way communication complexity. However, instead of exploiting the hardness of the two-party problem \textsf{Index} as in the unit-lengths case, we use the three-party problem $\textsf{Chain}_3$ introduced by Cormode et al. \cite{Cormode2018IndependentSI} instead. In $\textsf{Chain}_3$, the first two parties and the last two parties hold separate \textsf{Index} instances $(X_1, J_1), (X_2, J_2) \in \{0, 1\}^n \times [n]$ that are correlated in that they have the same answer bit, i.e., $X_1[J_1] = X_2[J_2] =: x$, and the objective for the third party is to determine the bit $x$. $\textsf{Chain}_3$ also requires a message of size $\Omega(n)$ to be solved. 
Similar to the unit-length case, the first two parties introduce clique gadgets based on the bit-strings $X_1$ and $X_2$, and the third party introduces additional crucial intervals. The strength of using $\textsf{Chain}_3$ is that, if the answer bit is zero, then the crucial intervals corresponding to $X_1[J_1]$ and $X_2[J_2]$ of all clique gadgets are missing, while if the answer bit is one then all of these intervals are present.
The method thus allows us to work with multiple clique gadgets instead of only a single one, which we exploit to obtain a stronger lower bound. See Section~\ref{sec:lb-arb-length} for details.

\subparagraph{Further Related Work} \label{app:further-related-work}
Crouch et al. \cite{cms13} initiated the study of graph problems in the sliding window model (recall that \textsf{Interval Selection} is an independent set problem on interval graphs). They showed that, similar to the streaming model, there exist sliding window algorithms that use space $\OrderT(n)$ for deciding \textsf{Connectivity} and \textsf{Bipartiteness}, where $n$ is the number of vertices in the input graph. They also gave positive results for the computation of cut-sparsifiers, spanners and minimum spanning trees, and they initiated the study of the \textsf{Maximum Matching} problem in the sliding window model (see below).

The smooth histogram technique has been successfully applied for designing sliding window algorithms for graph problems, and the state-of-the-art sliding window algorithms for \textsf{Maximum Matching} and \textsf{Minimum Vertex Cover} rely on the smooth histogram technique.  

For \textsf{Maximum Matching}, a $2$-approximation with space $\OrderT(n)$ can easily be achieved in the streaming model by running the \textsc{Greedy} matching algorithm, and the smooth histogram method immediately yields a $(4+\varepsilon)$-approximation sliding window algorithm when built on \textsc{Greedy}. Crouch et al. \cite{cms13} observed that the resulting algorithm can be analyzed more precisely and showed that it actually yields a $(3+\varepsilon)$-approximation sliding window algorithm. Regarding the weighted version of the \textsf{Maximum Matching} problem, the smooth histogram technique immediately yields a $(4+\varepsilon)$-approximation using the $(2+\varepsilon)$-approximation streaming algorithm by \cite{ps19}, and, again, as proved by Biabani et al. \cite{Biabani2021MaximumWeightMI}, the analysis can be tailored to the \textsf{Maximum Matching} problem to establish an approximation factor of $3.5+\varepsilon$ without changing the algorithm. Alexandru et al. \cite{adkk23} then improved the approximation factor to $3+\varepsilon$ by running the smooth histogram algorithm with a slightly different objective function.

Regarding the \textsf{Minimum Vertex Cover} problem, a smooth histogram-based algorithm is known to yield an approximation factor of $(3+\varepsilon)$ \cite{s21}, improving over previous work \cite{kr22}. 

\subparagraph{Outline}
In Section~\ref{sec:prelim}, we give notation, provide some clarification on the sliding window model, and introduce hard communication problems that we rely on for proving our lower bound results. Then, in Section~\ref{sec:unit-length}, we give our algorithm and lower bound for the case of unit-length intervals, and in Section~\ref{sec:arbitrary-length}, we give our algorithm and lower bound for arbitrary-length intervals. Finally, we conclude in Section~\ref{sec:conclusion} with open problems.

\section{Preliminaries} \label{sec:prelim}
For a set of intervals $\mathcal{I}$, we denote by $OPT(\mathcal{I})$ an independent subset of $\mathcal{I}$ of maximum size. We also apply $OPT(.)$ to substreams of intervals and to data structures that store intervals.

\subparagraph*{Sliding Window Algorithms} Throughout the document, we denote by $L$ the size of the sliding window, and we assume that $L$ is large enough, i.e., larger than a suitably large constant. For two streams of intervals $A,B$ we denote the stream that is obtained by concatenating $A$ and $B$ simply by $AB$, i.e., we omit a concatenation symbol. Furthermore,
for simplicity, we assume that the space required to store an interval is $O(1)$. However, if instead $k$ bits are accounted for storing an interval then the space complexities of our algorithms need to be multiplied by $k$.

\subparagraph*{Communication Complexity} As it is standard in the data streaming literature, our space lower bounds are proved via reductions to problems in the {\em one-way communication setting}. In this setting, multiple parties $P_1, P_2, \dots, P_k$ each hold a portion of the input data and communicate in order to solve a problem. Communication is {\em one-way}, i.e., party $P_1$ sends a message to $P_2$, who in turn sends a message to $P_3$. This continues until party $P_k$ has received a message from party $P_{k-1}$ and then outputs the result of the computation. The parties can make use of public and private randomness and need to report a correct solution with probability $2/3$. We refer the reader to  \cite{kn96} for an introduction to communication complexity.

We will exploit the hardness of the two-party communication problem $\textsf{Index}_n$, where we denote the first party by Alice and the second party by Bob, and the $k$-party communication problem $\textsf{Chain}_k$, which was recently introduced by Cormode et al. \cite{Cormode2018IndependentSI}.

\noindent \begin{center}\fbox{ 
\begin{minipage}{0.97\textwidth} 
\textbf{$\textsf{Index}_n$:}
\begin{itemize}
 \item Input: Alice holds a bit-string $X \in \{0, 1\}^n$, and Bob holds an index $J \in [n]$.
 \item Output: Bob outputs $X[J]$.
\end{itemize}
\end{minipage}} \end{center}
It is well-known that solving $\textsf{Index}_n$ requires Alice to send a message of size $\Omega(n)$.
\begin{theorem}[e.g. \cite{Jayram2008TheOC}]\label{thm:index}
Every randomized constant-error one-way communication protocol for $\textsf{Index}_n$ requires a message of size $\Omega(n)$.
\end{theorem}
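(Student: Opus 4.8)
The plan is to prove the bound by the standard information-theoretic argument against the hard input distribution $\mu$ in which $X$ is uniform over $\{0,1\}^n$ and $J$ is uniform over $[n]$, independently of $X$. Since the protocol is correct with probability at least $2/3$ on \emph{every} fixed input, it is correct with probability at least $2/3$ under $\mu$; averaging over the public coins and over Bob's private coins, I may fix all randomness so that the resulting \emph{deterministic} one-way protocol is still correct with probability at least $2/3$ over $(X,J) \sim \mu$. Write $M = M(X)$ for Alice's message and, for each $j \in [n]$, let $g_j(M) \in \{0,1\}$ be Bob's output on index $j$; set $\delta_j = \Pr_{X}[g_j(M(X)) \neq X_j]$, so that $\tfrac1n \sum_{j=1}^n \delta_j \le \tfrac13$.

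The crux is to show that $M$ must reveal $\Omega(n)$ bits of information about $X$. Since $X_j$ is Boolean and $g_j(M)$ predicts it with error $\delta_j$, Fano's inequality gives $H(X_j \mid M) \le H_2(\delta_j)$, where $H_2$ denotes the binary entropy function. Combining the chain rule with the fact that conditioning does not increase entropy,
\begin{equation*}
H(X \mid M) = \sum_{j=1}^{n} H\big(X_j \mid M, X_1, \dots, X_{j-1}\big) \le \sum_{j=1}^{n} H(X_j \mid M) \le \sum_{j=1}^{n} H_2(\delta_j).
\end{equation*}
As $X$ is uniform, $H(X) = n$, so $I(X; M) = H(X) - H(X \mid M) \ge n - \sum_{j} H_2(\delta_j)$. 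By concavity of $H_2$ (Jensen) and its monotonicity on $[0,\tfrac12]$, together with $\tfrac1n\sum_j \delta_j \le \tfrac13 < \tfrac12$, we get $\tfrac1n \sum_j H_2(\delta_j) \le H_2\big(\tfrac1n\sum_j \delta_j\big) \le H_2(\tfrac13)$, hence $I(X;M) \ge n\,\big(1 - H_2(\tfrac13)\big) = \Omega(n)$ since $1 - H_2(\tfrac13) > 0$. Finally, $|M| \ge H(M) \ge I(X;M) = \Omega(n)$, and for expected message length one uses $\Exp|M| \ge H(M)$ instead.

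The only subtleties are bookkeeping ones: the reduction to a deterministic protocol is a routine averaging over the coins, legitimate because the per-input error guarantee implies the average-case guarantee under $\mu$; and the inequality $H(X_j \mid M, X_{<j}) \le H(X_j \mid M)$ is simply ``conditioning does not increase entropy.'' I therefore expect no genuine obstacle — the statement is quoted rather than reproven precisely because this argument is folklore. A combinatorial alternative would fix Alice's message, restrict to the $\ge 2^{n-|M|}$ inputs carrying it, and derive a contradiction from a counting bound once one shows that bounded error forces those inputs to be Hamming-close; but the entropy argument above is cleaner and handles two-sided error transparently.
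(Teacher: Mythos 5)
Your argument is correct: it is the standard Fano-inequality proof of the $\textsf{Index}_n$ lower bound (fix the randomness by averaging against the uniform hard distribution, bound $H(X\mid M)$ coordinate-wise via Fano, and conclude $|M|\ge I(X;M)\ge n\bigl(1-H_2(\tfrac13)\bigr)=\Omega(n)$), and each step — the Yao-style derandomization, the chain rule with ``conditioning does not increase entropy,'' and the Jensen/monotonicity bound on $\tfrac1n\sum_j H_2(\delta_j)$ — is sound. The paper does not prove this theorem at all but imports it by citation, and the proof it implicitly refers to is exactly the one you give, so there is nothing further to compare.
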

The problem $\textsf{Chain}_k(n)$ can be regarded as chaining together $k-1$ instances of $\textsf{Index}_n$, where the instances are correlated in that they are guaranteed to have the same output.

\begin{center}
\fbox{\begin{minipage}{0.97\textwidth}
\textbf{\textsf{Chain$_k(n)$}:}
\begin{itemize}
 \item Input: For $1 \le i \le k-1$,  player $P_i$ receives a bitvector $X_i \in \{0,1\}^n$.
 Additionally, for any $2 \le i \le k$ player $P_i$ receives an index $J_{i-1} \in [n]$. The inputs are correlated such that
 $$X_1[J_1] = X_2[J_2] = \dots = X_{k-1}[J_{k-1}] = x \in \{0, 1\}  \ . $$ 
 \item  Output: Player $P_k$ outputs $x$.
\end{itemize}
\end{minipage}} \end{center}
Sundaresan \cite{sundaresan2024optimal} recently settled the communication complexity of $\textsf{Chain}_k(n)$, improving over the previous lower bounds by Cormode et al. \cite{Cormode2018IndependentSI} and Feldman et al. \cite{fnsz20}:

\begin{theorem}[\cite{sundaresan2024optimal}] \label{thm:chain-k}
    Every constant-error one-way communication protocol that solves $\textsf{Chain}_k(n)$ requires at least one message of size $\Omega(n/k)$.
\end{theorem}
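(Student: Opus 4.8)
\textit{Proof proposal.} The plan is to prove the equivalent \emph{distributional} statement via Yao's minimax principle and then run an information-theoretic telescoping argument along the chain of $k-1$ links. Fix the hard input distribution $\mu$: sample the answer bit $x\in\{0,1\}$ uniformly, sample the indices $J_1,\dots,J_{k-1}\in[n]$ independently and uniformly, and, for each $i$, sample $X_i\in\{0,1\}^n$ uniformly conditioned on $X_i[J_i]=x$ (so that the required correlation $X_1[J_1]=\dots=X_{k-1}[J_{k-1}]=x$ holds). By Yao's principle it suffices to show that every \emph{deterministic} one-way protocol $\Pi$ all of whose messages $M_1,\dots,M_{k-1}$ have length at most $s$ errs on $\mu$ with probability $\ge 1/3$ unless $s=\Omega(n/k)$.

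The guiding intuition is the matching upper bound, which should be made quantitative in reverse: with messages of length $\Theta(s)$, player $P_i$ can reveal $\Theta(s)$ coordinates of $X_i$, and whenever the index $J_i$ (known to $P_{i+1}$) lands among the revealed coordinates the value $x=X_i[J_i]$ becomes determined and can be forwarded using a single bit; since the $J_i$ are uniform and independent, the probability that \emph{some} link ``catches'' $x$ is $1-(1-\Theta(s/n))^{k-1}=\Theta(ks/n)$ when $s=o(n/k)$. The lower bound asserts that this greedy catching is essentially optimal: no protocol extracts more than $O(ks/n)$ bits of information about $x$. To formalize this I would introduce a potential $\Phi_i$ equal to (essentially) the information about $x$ in the local view that $P_{i+1}$ forwards downstream --- morally $\Phi_i=I(x\,;\,M_i,J_i)$, with the precise conditioning (e.g.\ also conditioning on, or including, earlier indices $J_{<i}$) pinned down so that the telescoping goes through --- prove a per-link increment bound $\Phi_i\le\Phi_{i-1}+O(s/n)$ (with $\Phi_0:=0$), and conclude $\Phi_{k-1}=O(ks/n)$. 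On the other hand, correctness of $\Pi$ forces the final view $(M_{k-1},J_{k-1})$ of $P_k$ to determine $x$ with probability $2/3$, so by Fano's inequality $\Phi_{k-1}=\Omega(1)$; combining the two bounds gives $s=\Omega(n/k)$.

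The heart of the argument, and the step I expect to be the main obstacle, is the per-link increment bound $\Phi_i\le\Phi_{i-1}+O(s/n)$. Its content is an ``$\textsf{Index}_n$-blindness'' phenomenon: $M_i$ is a function only of $X_i$ together with the already-seen transcript, it carries at most $s$ bits, and it is computed \emph{without} knowing the coordinate $J_i$ that will matter; hence, averaged over the uniform $J_i$, the message $M_i$ can raise the downstream knowledge of $X_i[J_i]$, and therefore of $x=X_i[J_i]$, by only about $I(M_i;X_i)/n\le s/n$ --- which is exactly the information form of the $\textsf{Index}_n$ lower bound, $\sum_{j}I(M_i;X_i[j])\le H(M_i)\le s$. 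The difficulty is that the messages are \emph{not} independent across links: each $M_i$ may depend on $M_{i-1}$ and $J_{i-1}$, which already carry some information about $x$. Thus the blindness bound must be applied conditioned on the earlier messages and indices, and one must verify both that this conditioning keeps $X_i$ ``fresh'' (it is independent of the past given $x$, and $x$ is nearly uniform given the past precisely when $\Phi_{i-1}$ is small, so $X_i$ remains nearly uniform given the past) and that the potential genuinely telescopes rather than picking up an extra factor of $k$. Controlling this interaction tightly --- so that the loss per link is $\Theta(s/n)$ rather than, say, $\Theta((s/n)\log k)$ or worse --- is exactly where \cite{sundaresan2024optimal} improves over the earlier arguments of Cormode et al.\ \cite{Cormode2018IndependentSI} and Feldman et al.\ \cite{fnsz20}, which lost additional factors in $k$. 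Once the increment bound is established, the telescoping, the (degenerate) base case, and the Fano step are routine.
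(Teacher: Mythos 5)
First, note that the paper does not prove Theorem~\ref{thm:chain-k} at all: it is imported as a black box from \cite{sundaresan2024optimal}, so there is no in-paper argument to compare yours against. Judged on its own terms, your write-up is a proof \emph{plan} rather than a proof, and the plan has a genuine gap exactly where you say it does: the per-link increment bound $\Phi_i\le\Phi_{i-1}+O(s/n)$ is asserted and motivated but never established, and that bound \emph{is} the theorem. The surrounding scaffolding (Yao's principle, the hard distribution, telescoping, Fano) is standard and would indeed be routine once the increment bound is in hand, so nothing of substance has been proved.

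To be concrete about why the missing step is not a formality: under your distribution, $X_i$ is uniform only \emph{conditioned on} $x$, and once you condition on the upstream view $(M_{<i},J_{<i})$ the posterior on $x$ is biased by up to $\Phi_{i-1}$; since $X_i[J_i]=x$, this bias leaks into every coordinate of $X_i$ simultaneously, so the clean ``\textsf{Index}-blindness'' inequality $\sum_j I(M_i;X_i[j])\le H(M_i)\le s$ no longer applies to the conditional distribution you actually need. One must show that this accumulated bias contributes only additively (not multiplicatively) across the $k-1$ links, and the natural ways of doing this lose factors of $k$ or $\log k$ --- which is precisely why the earlier arguments of \cite{Cormode2018IndependentSI} and \cite{fnsz20} obtained weaker bounds and why \cite{sundaresan2024optimal} is a separate paper. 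Your proposal correctly diagnoses the obstacle but does not overcome it; as it stands, the conclusion $s=\Omega(n/k)$ does not follow from what you have written.
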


\section{Unit-length Intervals} \label{sec:unit-length}
In this section, we give our sliding window algorithm (Subsection~\ref{sec:unit-length-ub}) 
and our lower bound (Subsection~\ref{sec:unit-length-lb}) 
for unit-length intervals.

\subsection{Sliding Window Algorithm for Unit-length Intervals}\label{sec:unit-length-ub}
We now describe our algorithm for unit-length intervals.

\begin{algorithm}[ht]
 \caption{Sliding window algorithm for \textsf{Interval Selection} on unit-length intervals}
 \label{alg:UnitSlidingAlg}
 \vspace{2mm}
 \textbf{Input:}  Stream $S$ of unit-length intervals, 
  window length $L$
 \vspace{2mm}
  \hrule
\vspace{2mm}
\textbf{Initialization:}
\begin{algorithmic}[1]
\State   $\texttt{latest} \leftarrow \emptyset$ the indexed set of stored intervals 
\end{algorithmic}
\vspace{2mm} 
\hrule
\vspace{2mm}
 \textbf{Streaming:}
\begin{algorithmic}[1]
\While{an interval $I = [r,r+1]$ is revealed, for some real number $r$} 
 \State{$\texttt{latest}(\lfloor r \rfloor) \leftarrow I$ }

  \If{$\exists \ J' = [r', r'+1] \in \texttt{latest}$ that has expired} \label{algline:latestdeletebegin}
   \State $\texttt{latest}( \lfloor r' \rfloor)\leftarrow \emptyset$ \label{algline:latestdeleteend}
  \EndIf

\EndWhile
\end{algorithmic}
\vspace{2mm}
\hrule
\vspace{2mm} 
\textbf{Post-processing:}
\begin{algorithmic}[1]
   \State{Return $OPT(\texttt{latest})$} 
\end{algorithmic}
\end{algorithm}

Our algorithm is simple: For each integer $r$, the algorithm maintains in $\texttt{latest}(r)$ the latest interval of the current sliding window with its left boundary in $[r, r+1)$. The key observation, which was also used in \cite{Bakshi2019WeightedMI}, is that the intervals $\{ \texttt{latest}(r) \ : \ r \text{ even} \}$ and $\{ \texttt{latest}(r) \ : \ r \text{ odd} \}$ form independent sets, and one of these sets constitutes a $2$-approximation.

\begin{theorem}
\label{thm:ub-unit-length}
Algorithm \ref{alg:UnitSlidingAlg}  is a deterministic $2$-approximation sliding window algorithm for \textsf{Interval Selection} on unit-length intervals that, at any moment, uses $\Order(|OPT|)$ space, where $OPT$ is a maximum independent set of intervals in the current sliding window.
\end{theorem}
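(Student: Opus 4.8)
The plan is to argue three things separately: correctness of the maintenance (i.e., that $\texttt{latest}(r)$ always holds the most recent non-expired interval with left endpoint in $[r,r+1)$), the approximation bound, and the space bound. The maintenance claim is essentially by construction: whenever an interval $I=[r,r+1]$ arrives we overwrite $\texttt{latest}(\lfloor r\rfloor)$, which is correct since $I$ is the newest such interval and any earlier interval indexed by the same integer is now dominated; and the expiry check removes any stored interval once its creation time falls out of the window. (A minor point: the pseudocode only deletes one expired interval per step, but since at most one interval arrives per step and hence at most one stored interval can newly expire per step — or we simply observe that a stale interval indexed by $\lfloor r\rfloor$ would be overwritten anyway the next time such an interval arrives — the invariant that every interval we actually use in $OPT(\texttt{latest})$ is live in the current window holds. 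I would state this carefully or adjust the loop to ``while'' if needed.)

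For the approximation guarantee, let $D = |\texttt{latest}|$ be the number of stored intervals at the query moment, and let $OPT$ denote a maximum independent set in the current window. First I would show $OPT \le D$: every interval in the window with left endpoint in $[r,r+1)$ is represented (by some interval, not necessarily itself) in $\texttt{latest}(r)$, and any independent set can contain at most one interval with left endpoint in $[r,r+1)$ (two unit intervals with left endpoints in a common length-$1$ half-open interval must overlap), so an optimal solution uses at most $D$ distinct integer ``slots,'' giving $OPT \le D$. Next I would show we can extract an independent set of size $\ge D/2$ from $\texttt{latest}$. The key observation is that $\{\texttt{latest}(r) : r \text{ even}\}$ is an independent set: two stored intervals $\texttt{latest}(2i)=[a,a+1]$ and $\texttt{latest}(2j)=[b,b+1]$ with $i<j$ satisfy $a < 2i+1 \le 2j \le b$, so $a+1 \le 2j \le b$ and they are disjoint (touching at an endpoint is allowed as ``non-overlapping''; I would match whatever convention the paper uses for unit intervals, using strictness if needed). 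Symmetrically the odd-indexed stored intervals form an independent set. Since the $D$ stored intervals are partitioned into the even-indexed and odd-indexed groups, one group has size $\ge D/2$, so $OPT(\texttt{latest}) \ge D/2 \ge OPT/2$. Combined with the trivial $OPT(\texttt{latest}) \le OPT$ (it is a feasible independent set within the window), this gives the $2$-approximation.

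For the space bound, the algorithm stores exactly $D = |\texttt{latest}|$ intervals plus $O(1)$ bookkeeping, and we have just shown $D \le OPT$ wait --- actually $OPT \le D$, so I instead need $D = O(|OPT|)$, i.e., a bound the other way. I would get this from the two independent sets again: $D/2 \le OPT(\texttt{latest}) \le OPT$, hence $D \le 2\, OPT = O(|OPT|)$. So at every moment the space is $O(|OPT|)$ as claimed (treating each stored interval as $O(1)$ space per the stated convention, and noting the indexed set $\texttt{latest}$ can be kept in a hash map or balanced tree keyed by the integer $\lfloor r\rfloor$ with no overhead beyond $O(1)$ per stored element).

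The main obstacle is not any single step — each is short — but rather being precise about two bookkeeping subtleties: (i) the ``expiry'' semantics, i.e., ensuring the deletion step keeps $\texttt{latest}$ free of expired intervals so that $OPT(\texttt{latest})$ is genuinely a window-feasible solution (I would either argue one deletion per step suffices or tighten the loop), and (ii) the open/closed endpoint convention for unit intervals, since whether $[a,a+1]$ and $[a+1,a+2]$ count as disjoint affects the ``even slots are independent'' argument and the $OPT \le D$ argument; I would fix the convention up front and keep the inequalities consistent with it. Everything else follows from the sandwich $D/2 \le OPT(\texttt{latest}) \le OPT \le D$.
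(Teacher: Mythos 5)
Your proposal is correct and follows essentially the same route as the paper: the sandwich $D/2 \le OPT(\texttt{latest}) \le OPT \le D$ is exactly the paper's chain $|OPT| \le |\texttt{latest}| \le 2|ALG|$, with the same injective slot-mapping argument for the upper bound on $OPT$ and the same even/odd parity split for extracting an independent set of size $D/2$, and the space bound is derived identically. The bookkeeping subtleties you flag (expiry semantics and endpoint conventions) are handled only implicitly in the paper as well, and your parity argument in fact yields strict disjointness, so neither affects correctness.
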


\begin{proof}
We will first prove that  Algorithm \ref{alg:UnitSlidingAlg} indeed computes a $2$-approximation, and then argue that the algorithm satisfies the memory requirements.

We call a unit-length interval $I$ active if it is included in the current sliding window (one of the $L$ most recent intervals of the stream).
Otherwise, we say that $I$ is expired.

Let $OPT$ be a maximum independent set in the current sliding window and let $ALG$ be the independent set reported by Algorithm \ref{alg:UnitSlidingAlg}.
Define the indexed set $\texttt{latest}$ as in the algorithm.

\subparagraph{Approximation} We will show that
\begin{align} \label{eqn:320}
   |OPT| \le |\texttt{latest}| \le 2 \cdot |ALG| 
\end{align} 
holds, which then establishes the approximation factor of the sliding window algorithm.

First, we will prove $|OPT| \le |\texttt{latest}|$ holds. To this end, we will show that the function $f : OPT \to \texttt{latest}$ defined as $f([x,x+1]) = latest(\lfloor x \rfloor)$ is injective.

We will first argue that $f$ is well-defined in that  $\texttt{latest}(\lfloor x \rfloor)$ exists, for every $[x, x+1] \in OPT$.
Indeed, by inspecting the algorithm, when $I :=[x,x+1] \in OPT$ arrives in the stream, $\texttt{latest}(\lfloor x \rfloor)$ is set to $I$, and, in particular,
while $I$ is active, $\texttt{latest}(\lfloor x \rfloor)$ is never set to $\emptyset$. 
It may, however, happen that it is replaced with an interval which appeared after $I$.
In both cases, $f$ is well-defined.

To see that $f$ is injective, observe that for any two intervals in $OPT$, since these intervals are independent and of unit-length, the integer parts of their left endpoints are distinct. Hence, $f(I_1) \neq f(I_2)$, for any two distinct intervals $I_1, I_2 \in OPT$.

Since $f$ is well-defined and injective, we obtain that $|OPT| \le |\texttt{latest}|$, which thus proves the first inequality of Inequality~\ref{eqn:320}.
It remains to prove the second, i.e.,  that $|\texttt{latest}| \le 2 \cdot |ALG|$ also holds.

To see this, observe that, for two integers $x \neq y$ of the same parity, $\texttt{latest}(x)$ and $\texttt{latest}(y)$ (if they exist) are independent. This is because $|y - x| \ge 2$ and the intervals have unit-length.
By the pigeonhole principle, there are at least $\frac{|\texttt{latest}|}{2}$ intervals where their indices inside $\texttt{latest}$ have the same parity, which implies that 
$|ALG| \ge \frac{|\texttt{latest}|}{2}$. 

\subparagraph{Space} The algorithm stores $|\texttt{latest}|$ intervals in the current sliding window. 
Then, as proved above, we have  $|\texttt{latest}| \le 2|ALG| \le 2 |OPT|$, which implies that the space used by the algorithm is $\Order(|OPT|)$.

\end{proof}

\subsection{Space Lower Bound} \label{sec:unit-length-lb}

We now show that sliding window algorithms that use space $o(L)$ cannot compute a $(2-\varepsilon)$-approximation to \textsf{Interval Selection} on unit-length intervals, for any $\varepsilon > 0$. Recall that, in the streaming model, a $\frac{3}{2}$-approximation can be computed with space $O(|OPT|)$.

\begin{theorem}
\label{thm:lb-unit-length}
    Let $\varepsilon > 0$ be any small constant. Then, any algorithm in the sliding window model that computes a $(2 - \varepsilon)$-approximate solution to \textsf{Interval Selection} on unit-length intervals with probability at least $2/3$ requires a memory of size $\Omega(L)$.
\end{theorem}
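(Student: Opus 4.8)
The plan is to reduce from $\textsf{Index}_n$ with $n = \Theta(L)$, following the outline sketched in the introduction. Suppose $\mathcal{A}$ is a sliding window algorithm for \textsf{Interval Selection} on unit-length intervals that computes a $(2-\varepsilon)$-approximation with probability at least $2/3$ using $s$ bits of memory; I will show $s = \Omega(L)$. Given an $\textsf{Index}_n$ instance $(X, J)$ with $X \in \{0,1\}^n$ and $J \in [n]$, Alice builds a \emph{clique gadget}: a stack of $n$ interval slots shifted slightly from left to right, where the $i$-th slot contains an interval if and only if $X[i] = 1$. Concretely, slot $i$ could be the interval $[i\delta, i\delta + 1]$ for a small spacing $\delta = \Theta(1/n)$, so that all present interval slots pairwise overlap (a clique in the interval graph). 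Alice feeds these intervals into $\mathcal{A}$ in order of increasing $i$ and sends the memory state of $\mathcal{A}$, of size $s$, to Bob. Bob then feeds a single ``separator'' interval $K$ placed just to the right of slot $J$, chosen so that $K$ overlaps all slots with index $\ge J+1$ and is disjoint from all slots with index $\le J$.

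The crux of the reduction is to exploit that $\mathcal{A}$ must answer correctly even after a prefix of the stream has expired. After Bob inserts $K$, he continues to pad the stream with ``dummy'' intervals placed far to the right (pairwise disjoint but each disjoint from everything that came before), advancing the window so that exactly the intervals in slots $1, \dots, J-1$ fall out of the current window, while slots $J, J+1, \dots, n$ (those that are present) and $K$ remain active. In this final window, if $X[J] = 1$ then slot $J$ is present and, being disjoint from $K$, yields an independent set $\{$slot $J$, $K\}$ of size $2$; if $X[J] = 0$ then the active intervals from the clique gadget all have index $\ge J+1$, hence all overlap $K$ and all overlap each other, so the maximum independent set has size $1$ (ignoring the dummies, which I will handle separately — see below). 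Since $\mathcal{A}$ is a $(2-\varepsilon)$-approximation, it must report a solution of size $2$ in the first case and size $1$ in the second, so Bob can read off $X[J]$ from the size of $\mathcal{A}$'s output with probability $\ge 2/3$. By Theorem~\ref{thm:index}, $s = \Omega(n) = \Omega(L)$.

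The main obstacle, and the ``additional technical challenge'' alluded to in the introduction, is that the padding dummy intervals Bob uses to shift the window are themselves pairwise disjoint, so a long run of them would create a large independent set that swamps the size-$1$ versus size-$2$ distinction. The fix is to be careful about how the window is advanced: instead of many disjoint dummies, Bob should use dummy intervals that overlap in a controlled way — e.g., a second clique-like block of mutually overlapping unit intervals placed far to the right — so that at every intermediate moment and in the final window they contribute only $O(1)$ to any independent set, while still occupying enough stream positions to expire the first $J-1$ slots. One must also ensure that the total stream length and the value of $J$ are such that the window boundary lands precisely between slot $J-1$ and slot $J$; this requires choosing $n$, $\delta$, and the number of dummy intervals as functions of $L$ so that the arithmetic works out, and possibly having Alice insert a known number of filler intervals before the gadget so that Bob can compute how many dummies to append. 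I would set $n = cL$ for a suitable constant $c < 1$, reserve the remaining $(1-c)L$ window slots for Alice's fillers and Bob's dummies, and verify that the separator interval $K$ and the active tail $\{$slot $J, \dots,$ slot $n\}$ together with $O(1)$ dummy contribution give a maximum independent set of size exactly $2$ iff $X[J]=1$. The remaining steps — checking that all claimed overlaps and disjointnesses hold for the chosen coordinates, and that $\Omega(n)$ translates to $\Omega(L)$ — are routine.
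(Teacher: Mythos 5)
Your high-level plan (reduction from $\textsf{Index}_{\Theta(L)}$ via a clique gadget, a separator interval inserted by Bob, and padding to expire the prefix) is the same as the paper's, but the two technical points you defer to the end are exactly where the construction breaks, and neither of your proposed remedies works. First, you have Alice \emph{omit} the interval in slot $i$ whenever $X[i]=0$. The sliding window is defined by the number of stream items, so the stream position of slot $J$ equals the number of ones in $X[1..J]$ (plus fillers), a quantity Bob does not know; he therefore cannot compute how many padding intervals to append so that the window boundary lands exactly between slots $J-1$ and $J$. And it must land exactly there: if any present slot $i<J$ survives, $\{\text{slot } i, K\}$ is an independent set of size $2$ even when $X[J]=0$, while if slot $J$ expires the yes-case witness is lost. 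Inserting a known number of fillers \emph{before} the gadget, as you suggest, does not address this, since the unknown count is inside the gadget. The paper resolves it by having Alice insert an interval for \emph{every} $i$: when $X[i]=0$ the interval is placed at a different location ($[1-\frac{i}{L^2}, 2-\frac{i}{L^2}]$) where it intersects all other relevant intervals, so the stream length is fixed at $L-2$ independently of $X$ and the zero-intervals can never participate in an independent set of size $2$.

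Second, your fix for the padding --- a separate clique of dummies ``placed far to the right'' --- contributes exactly one interval to every maximum independent set of the final window, turning the $2$-versus-$1$ gap into $3$-versus-$2$. A $(2-\varepsilon)$-approximation must then output at least $\lceil 3/(2-\varepsilon)\rceil = 2$ intervals in the yes case but may also output $2$ (e.g., $K$ together with one dummy) in the no case, so Bob can no longer decide $X[J]$ from the output. The padding's contribution must be exactly $0$, not merely $O(1)$. The paper achieves this by making Bob's padding intervals a continuation of the \emph{same} clique gadget, namely $I_i=[\frac{i}{2L-1}, 1+\frac{i}{2L-1}]$ for $L\le i\le L+J-1$: these mutually overlap, overlap every surviving gadget slot, and overlap the separator $I_{L-1}$, so they add nothing to any independent set while advancing the window by precisely $J-1$ positions.
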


\begin{proof}
Let $\mathcal{A}$ be a  sliding window algorithm for \textsf{Interval Selection} on unit-length intervals with approximation factor  $2 - \varepsilon$, for some $\varepsilon > 0$.

We will show how $\mathcal{A}$ can be used in order to obtain a communication protocol for $\textsf{Index}_{L-2}$. 

To this end, let $(X, J) \in \{0,1\}^{L-2} \times [L-2]$ be Alice and Bob's input to $\textsf{Index}_{L-2}$. The two players proceed as follows:

\begin{itemize} 
 \item \textbf{Alice:} Alice feeds the intervals $I_1, I_2, \dots, I_{L-2}$ into $\mathcal{A}$ (in the given order), where
\begin{align*}
    I_i = \begin{cases}
        [\frac{i}{2L-1}, 1 + \frac{i}{2L-1}], & \text{if } X[i] = 1 \ , \\
        [1 - \frac{i}{L^2}, 2 - \frac{i}{L^2}], & \text{if } X[i] = 0 \ . 
    \end{cases}
\end{align*}
 Alice then sends the memory state of $\mathcal{A}$ to Bob.
 \vspace{0.3cm}
 
 \item \textbf{Bob:} Using Alice's message, Bob continues the execution of $\mathcal{A}$ and feeds the interval $$I_{L-1} = \left[1 + \frac{J}{2L-1} + \frac{1}{(2L-1)^2}, 2 + \frac{J}{2L-1} + \frac{1}{(2L-1)^2}\right]$$ into $\mathcal{A}$. 
 Bob also adds the intervals $I_i = [\frac{i}{2L-1}, 1 + \frac{i}{2L-1}]$ to $\mathcal{A}$, for $L \le i \le L + J - 1$ in order to make the sliding window of the algorithm $\mathcal{A}$ advance.
 Bob computes $\mathcal{A}$'s output in the latest sliding window consisting of the intervals defined as $\mathcal{S} = \{I_i | J \le i \le L+J-1\}$.

\end{itemize}

This construction is illustrated in Figure~\ref{fig:unitlowerbound}. 

\begin{figure}[ht]
\centering
\begin{tikzpicture}[scale=1.5]

\draw[red] (0,0) node[anchor=east] {$I_1$}  -- (2,0);
\draw[thick,dotted] (1,0) --  (1.3,0.4);
\draw[red,dashed] (1.8,2.2) node[anchor=east] {$I_2$}  -- (3.8,2.2);
\draw[thick, dotted] (2.79,2.2) -- (2.81,2.8);
\draw (0.3,0.4)  node[anchor=east] {$I_J$} -- (2.3,0.4);
\draw (2.4,0.4)  -- (4.4,0.4) node[anchor=west] {$I_{L-1}$};
\draw (0.6,0.7) node[anchor=east] {$I_{J+1}$} -- (2.6,0.7);

\draw[thick, dotted] (1.6,0.7) -- (2,1.3);
\draw (1.85,2.8) [dashed] node[anchor=east] {$I_{L-2}$}  -- (3.85,2.8);
\draw (1.9,3) [dashed] node[anchor=east] {$I_L$}  -- (3.9,3);
\draw (1,1.3) node[anchor=east] {$I_{L+J-2}$} -- (3,1.3);
\draw (1.2,1.6) node[anchor=east] {$I_{L+J-1}$} -- (3.2,1.6);

\end{tikzpicture}
\caption{This figure illustrates the instances created by Alice and Bob in the proof of Theorem~\ref{thm:lb-unit-length} for an instance of $\textsf{Index}_{L-2}$ with $X[J] = 1$. The dashed intervals on the upper part correspond to the zero elements of the bitvector $X$. The red intervals $I_1$, $I_2$ correspond to expired intervals. $I_J$ is the only non-expired interval disjoint with the special interval $I_{L-1}$. Since $X[J] = 1$, the optimal solution is of size $2$. If $X[J]$ was equal to $0$ then the interval $I_J$ would not be disjoint with $I_{L-1}$, and, thus, an optimal solution would be of size $1$. }
\label{fig:unitlowerbound}
\end{figure}

We observe that if $X[J] = 1$ then $|OPT(\mathcal{S})| = |\{I_J, I_{L-1}\}| = 2$, while if $X[J] = 0$ then $|OPT(\mathcal{S})| = 1$. Since $\mathcal{A}$ has an approximation factor of $2-\varepsilon$, $\mathcal{A}$ needs to report the unique solution of size $2$ if $X[J] = 1$, and a solution of size $1$ when $X[J] = 0$. Bob can thus distinguish between the two cases and solve $\textsf{Index}_{L-2}$.

Since the protocol solves $\textsf{Index}_{L-2}$, 
by Theorem~\ref{thm:index}, the protocol must use a message of size $\Omega(L)$. The protocol's message is $\mathcal{A}$'s memory state, and, hence, $\mathcal{A}$ must use space $\Omega(L)$.

\end{proof}

\section{Arbitrary-length Intervals}\label{sec:arbitrary-length}
In this section, we give our $(\frac{11}{3}+\varepsilon)$-approximation sliding window algorithm and our $(\frac{5}{2}-\varepsilon)$-approximation lower bound for \textsf{Interval Selection} on arbitrary-length intervals. 

\subsection{\texorpdfstring{$(\frac{11}{3}+\varepsilon)$}{}-approximation Sliding Window Algorithm}
Our algorithm is obtained by running multiple instances of the Cabello and P\'{e}rez-Lantero streaming algorithm for \textsf{Interval Selection} on arbitrary-length intervals \cite{cp17}. In the following, we abbreviate the algorithm by $\mathcal{CP}$. Since we employ various properties of the $\mathcal{CP}$ algorithm, we discuss the $\mathcal{CP}$ algorithm in Subsection~\ref{sec:cp}. We use the $\mathcal{CP}$ algorithm in the context of the smooth histogram technique, which we discuss in Subsection~\ref{sec:smooth-histogram}. Finally, we give our sliding window algorithm and its analysis in Subsection~\ref{sec:alg-arbitrary-length}.

\subsubsection{Cabello and P\'{e}rez-Lantero Algorithm \label{sec:cp}}

For an interval $I = [a,b]$, we define $\text{left}(I) = a$ and $\text{right}(I) = b$.

The Cabello and P\'{e}rez-Lantero algorithm is depicted in Algorithm~\ref{alg:cabello}.

The listing of the algorithm uses the auxiliary functions $\text{left}(.)$ and $\text{right}(.)$, which return the left and right delimiters of an interval, respectively. 

\begin{algorithm}[ht]
 \caption{Cabello and P\'{e}rez-Lantero Algorithm ($\mathcal{CP}$)}
 \label{alg:cabello}
 \vspace{2mm}
 \textbf{Input:} A stream $S$ of  intervals 
 \vspace{2mm}
  \hrule
\vspace{2mm}
\textbf{Initialization:}
\begin{algorithmic}[1]
\State $\mathcal{R} \leftarrow \{\mathbb{R}\}$ a region partition
\State $\text{leftmost}(\mathbb{R}) = \text{rightmost}(\mathbb{R})  = \mathbb{R}$
\end{algorithmic}
\vspace{2mm} 
\hrule
\vspace{2mm}
 \textbf{Streaming:}
\begin{algorithmic}[1]
\While{an interval $I$ is revealed}
 \If{there exists $R = [a,b) \in \mathcal{R}$ such that $I \subseteq R$}
  \If{$I \cap \text{leftmost}(R) \cap \text{rightmost}(R) \neq \emptyset$}
    \State{\textbf{if} $\text{right}(I) \le \text{right}(\text{leftmost}(R))$ \textbf{then} leftmost$(R) \leftarrow I$}
    \State{\textbf{if} $\text{left}(I) \ge \text{left}(\text{rightmost}(R))$ \textbf{then} rightmost$(R) \leftarrow I$}     
   \Else
     \If{$\text{right}(I) \le \text{left}(\text{rightmost}(R)) $}
      \State{Let $R_1 \leftarrow (a,\text{right}(I)]$}
      \State{Let $R_2 \leftarrow (\text{right}(I)),b]$}
      \State{leftmost$(R_1) \leftarrow I$, rightmost$(R_1) \leftarrow I$}      
      \State{leftmost$(R_2) \leftarrow \text{rightmost}(R)$, rightmost$(R_2) \leftarrow \text{rightmost}(R)$}
     \Else
      \State{Let $R_1 \leftarrow [a,\text{left}(I))$}
      \State{Let $R_2 \leftarrow [\text{left}(I)),b)$}
      \State{leftmost$(R_2) \leftarrow I$, rightmost$(R_2) \leftarrow I$}
      \State{leftmost$(R_1) \leftarrow \text{leftmost}(R)$,rightmost$(R_1) \leftarrow \text{leftmost}(R)$}
      \EndIf
    \State Insert $R_1,R_2$ into $\mathcal{R}$
    \State Remove $R$ from $\mathcal{R}$
    \EndIf
    \EndIf
 \label{streaming:initend}
\EndWhile
\end{algorithmic}
\vspace{2mm}
\hrule
\vspace{2mm} 
\textbf{Post-processing:}
\begin{algorithmic}[1]
   \State Return $\{ \text{leftmost}(R) | R \in \mathcal{R}\}$ 
\end{algorithmic}
\end{algorithm}

The key idea behind the algorithm is to maintain a partition $\mathcal{R}$ of the real line $\mathbb{R}$ that we refer to as a region partition. Initially, the algorithm starts with the single region $\mathcal{R} = \{ \mathbb{R} \}$, and as the algorithm proceeds, the real line is partitioned into half-open intervals. This is achieved as follows. Arriving intervals that cross a region boundary are ignored. Consider thus an arriving interval $I$ that lies entirely within a region. In each region, the algorithm stores the left-most (the interval with the left-most right delimiter) and right-most (the interval with the right-most left delimiter) intervals within the region that it has observed thus far. If the interval $I$ together with either the left-most or the right-most interval of the region forms an independent set of size two then the region is split into two regions and the left-most and right-most intervals are updated accordingly. Otherwise, if $I$ intersects with both the left-most and right-most intervals of the region then $I$ is only used to potentially replace the left-most and/or right-most intervals of the region.

Some key properties of the algorithm that we will reuse in this work are summarized in Figure~\ref{fig:key-cp} (see \cite{cp17} for proofs). 

\begin{figure}[ht]
\noindent 
 \fbox{
\begin{minipage}{0.95 \textwidth}

\begin{enumerate}
 \item[\textbf{C1}] For each reigon $R \in \mathcal{R}$, the $\mathcal{CP}$ algorithm stores at least one (and at most two) intervals and the input instance is such that there are no two disjoint intervals that lie within region $R$. \label{prop:cabellowindowinvariant}
 \item[\textbf{C2}] The $\mathcal{CP}$ algorithm outputs a solution of size $|\mathcal{R}|$, i.e., one interval per region. Furthermore, we have that $|\mathcal{R}| \ge \frac{|OPT|+1}{2}$, i.e., the algorithm has an approximation factor slightly better than $2$. \label{prop:cabelloapproximation}
 \item[\textbf{C3}] The algorithm uses space $O(|OPT|)$.
\end{enumerate}
\end{minipage}}
\caption{Key Properties of the $\mathcal{CP}$ Algorithm. \label{fig:key-cp}}
\end{figure}

Besides these properties, we require another property that allows us to employ the algorithm in the context of the smooth histogram technique:

\begin{lemma} \label{lem:monotonic}
 The $\mathcal{CP}$ algorithm is monotonic, i.e., for any two streams of intervals $A,B$ we have that
 $$|\mathcal{CP}(A)| \le |\mathcal{CP}(A B)| \ . $$
\end{lemma}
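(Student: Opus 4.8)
The plan is to show that the region partition $\mathcal{R}$ only ever gets refined as more intervals arrive, so the number of regions — which by property \textbf{C2} equals the output size — is nondecreasing. Concretely, I would argue that if $\mathcal{R}_A$ denotes the region partition after processing $A$ and $\mathcal{R}_{AB}$ the region partition after processing $AB$, then $\mathcal{R}_{AB}$ is a refinement of $\mathcal{R}_A$, i.e., every region of $\mathcal{R}_{AB}$ is contained in some region of $\mathcal{R}_A$. Since a refinement of a partition of $\mathbb{R}$ into finitely many half-open intervals has at least as many parts, this gives $|\mathcal{R}_A| \le |\mathcal{R}_{AB}|$, and then $|\mathcal{CP}(A)| = |\mathcal{R}_A| \le |\mathcal{R}_{AB}| = |\mathcal{CP}(AB)|$ by \textbf{C2}.

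The key step is an invariant maintained over the stream: \emph{processing a single interval either leaves $\mathcal{R}$ unchanged or replaces exactly one region $R$ by two half-open subregions $R_1, R_2$ with $R_1 \cup R_2 = R$ and $R_1 \cap R_2 = \emptyset$.} This is immediate by inspection of Algorithm~\ref{alg:cabello}: the only place where $\mathcal{R}$ is modified is the branch that inserts $R_1, R_2$ and removes $R$, and in all four sub-cases (the two choices of which side of $I$ splits the region, crossed with the leftmost/rightmost bookkeeping) the pair $(R_1, R_2)$ is a partition of $R$ into two half-open intervals; every other branch (interval crosses a boundary, or $I$ intersects both leftmost and rightmost) only updates the stored witness intervals and leaves $\mathcal{R}$ untouched. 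Chaining this single-step invariant over the suffix $B$, starting from the partition $\mathcal{R}_A$, shows by induction on $|B|$ that $\mathcal{R}_{AB}$ refines $\mathcal{R}_A$.

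I expect the only real subtlety — the ``main obstacle'' — to be handling the bookkeeping carefully enough that the single-step claim is genuinely airtight: one must check that the region in which an arriving interval $I$ lies is a region of the \emph{current} partition (which already refines $\mathcal{R}_A$), so that splitting it preserves refinement of $\mathcal{R}_A$; and one must confirm that the degenerate possibility $R_1 = \emptyset$ or $R_2 = \emptyset$ cannot occur, which follows because a split happens only when $I$ forms an independent pair with the leftmost or rightmost witness of $R$, forcing $\text{right}(I)$ (resp.\ $\text{left}(I)$) to lie strictly inside $R$. Once that is pinned down, the rest is the routine observation that $|\mathcal{R}|$ starts at $1$ and never decreases, together with the identity $|\mathcal{CP}(\cdot)| = |\mathcal{R}|$ from \textbf{C2}.
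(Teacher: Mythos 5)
Your proposal is correct and matches the paper's argument exactly: the paper's proof is a two-line observation that the output is one interval per region and that, by construction, the number of regions never decreases, which is precisely the single-step refinement invariant you spell out in more detail. Your additional checks (that splits partition a region into two nonempty half-open pieces, and that the refinement chains over the suffix $B$) are just an expanded version of the paper's "by construction" claim.
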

\begin{proof}
The output produced by the $\mathcal{CP}$ algorithm consists of one interval per region. The lemma then follows since, by construction, the number of regions cannot decrease. 
\end{proof}

\subsubsection{The Smooth Histogram Technique \label{sec:smooth-histogram}}
The $\mathcal{CP}$ algorithm can be employed in the context of the smooth histogram method to yield a $(4+\varepsilon)$-approximation sliding window algorithm for \textsf{Interval Selection} for arbitrary-length intervals that uses space $\OrderT(|OPT|)$. This is achieved as follows (see Algorithm~\ref{alg:smooth-histogram}):

\begin{algorithm}
 \begin{algorithmic}[1]
  \While{an Interval $I$ is revealed}
   \State{Create new instance of the $\mathcal{CP}$} algorithm
    \State{Feed $I$ into all $\mathcal{CP}$ instances that are currently running}    
    \State{\textbf{Clean-up:} \\  
    $\quad \quad$ Remove oldest run of $\mathcal{CP}$ if it has expired}  \\
    $\quad \quad$ Let $\mathcal{CP}_1, \mathcal{CP}_2, \dots$ be the runs sorted in increasing order with respect to their starting positions. \\
    $\quad \quad$ Fix $i = 1$. \\
    $\quad \quad$ While $\mathcal{CP}_i$ exists, do  the following: \\
    
    $\quad \quad \quad$ Find maximal $j > i$ such that $|\mathcal{CP}_i| \le (1+\varepsilon)|\mathcal{CP}_{j}|$ \\
    $\quad \quad \quad$ If such $j$ exists, delete all the instances of $\mathcal{CP}$ strictly between $i$ and $j$. \\    
    $\quad \quad \quad$ Renumber the runs sorted in increasing order with respect to their starting positions.\\
    $\quad \quad \quad$ Increase $i$ by 1.   
    \State{\textbf{output} solution of oldest run}
  \EndWhile
 \end{algorithmic}
 \caption{Smooth Histogram Technique applied to the $\mathcal{CP}$ algorithm \label{alg:smooth-histogram}} 
\end{algorithm}

Upon the arrival of a new interval, Algorithm~\ref{alg:smooth-histogram} first creates a new run of the $\mathcal{CP}$ algorithm and feeds the new interval into all currently running copies of 
$\mathcal{CP}$. The method relies on a clever way of deleting unnecessary runs.
First, we delete a potentially expired run, i.e., a run which contains an interval that appeared before the start of the current sliding window. Then, for each run $\mathcal{CP}_i$, we take the latest run $\mathcal{CP}_j$ such that $|\mathcal{CP}_i| \le (1+\varepsilon) |\mathcal{CP}_j|$, provided such $j > i$ exists, and we delete every run appearing later than $\mathcal{CP}_i$ but earlier than $\mathcal{CP}_j$. 

Consider the moment after a clean-up took place, and let us denote the stored runs by $\mathcal{CP}_1, \dots, \mathcal{CP}_{\ell}$. The clean-up rule implies that the stored runs have the properties depicted in Figure~\ref{fig:smooth-histogram}.

\begin{figure}[H]
\noindent 
 \fbox{
\begin{minipage}{0.95\textwidth}

\begin{enumerate}
    \item[\textbf{S1}] For every $i \le \ell - 2$: $|\mathcal{CP}_i| \ge  (1+\varepsilon) |\mathcal{CP}_{i+2}|$, and \label{prop:sh-1}
    \item[\textbf{S2}] Either $|\mathcal{CP}_i| \le  (1+\varepsilon) |\mathcal{CP}_{i+1}|$ holds, or, if $|\mathcal{CP}_i| \ge  (1+\varepsilon) |\mathcal{CP}_{i+1}|$ then the starting positions of run $i$ and $i+1$ differ by only a single interval. \label{prop:sh-2}
\end{enumerate}
\end{minipage} }
 \caption{Key Properties of the Smooth Histogram Technique. \label{fig:smooth-histogram}}
\end{figure}

Property~\textbf{S1} implies that there are at most $O(\log_{1+\varepsilon}(L))$ active runs of $\mathcal{CP}$ and thus the space of Algorithm~\ref{alg:smooth-histogram} is at most a factor $O(\log_{1+\varepsilon}(L))$ larger than the space used by $\mathcal{CP}$. 

Property~\textbf{S2} implies that either consecutive runs differ by at most a $1+\varepsilon$ factor in solution size or are such that their starting times differ by only a single interval. 

We now provide a proof that allows us to see that Algorithm~\ref{alg:smooth-histogram} is a $(4+2 \cdot\varepsilon)$-approximation algorithm for \textsf{Interval Selection} for arbitrary-length intervals. This proof will establish insight into how the analysis of our more involved $(\frac{11}{3}+\varepsilon)$-approximation algorithm is conducted and thus serves as a warm-up.

\begin{theorem}
\label{thm:smooth-histogram}
Algorithm~\ref{alg:smooth-histogram} is a $(4+ 2 \cdot \varepsilon)$-approximation sliding window algorithm for \textsf{Interval Selection} on arbitrary-length intervals that uses space $\OrderT(|OPT|)$, where $OPT$ denotes an optimal solution in the current sliding window.
\end{theorem}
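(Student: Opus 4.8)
The plan is to analyze the situation right after a clean-up step, when the stored runs are $\mathcal{CP}_1, \dots, \mathcal{CP}_\ell$ sorted by starting position, and to show that the output, which is the solution of the oldest run $\mathcal{CP}_1$, is a $(4+2\varepsilon)$-approximation of $|OPT|$ in the current window $W$. The space bound is immediate from Property~\textbf{S1}: it gives $|\mathcal{CP}_i| \ge (1+\varepsilon)|\mathcal{CP}_{i+2}|$, so the number of runs is $O(\log_{1+\varepsilon} L)$, and since each run uses $O(|OPT|)$ space by Property~\textbf{C3} (where $|OPT|$ for a sub-window is at most $|OPT|$ for $W$ — here one should be slightly careful, but any run's optimum is bounded by the optimum of the window it sees, which is contained in the full stream suffix; the cleanest statement is that each run uses $O(|OPT(W)|)$ space since a run started after $W$'s left boundary sees a subset of $W$), the total is $\OrderT(|OPT|)$.

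For the approximation guarantee, the key case distinction is on whether $\mathcal{CP}_1$'s starting position is still inside the window $W$ or not. If $\mathcal{CP}_1$ starts inside $W$, then $\mathcal{CP}_1$ has seen a suffix of the stream that contains all of $W$, so by monotonicity (Lemma~\ref{lem:monotonic}) $|\mathcal{CP}_1| \ge |\mathcal{CP}(W)| \ge \frac{|OPT(W)|+1}{2}$ by Property~\textbf{C2}, which is even a $2$-approximation. The interesting case is when $\mathcal{CP}_1$ has expired, i.e., its starting position is before the left boundary of $W$. Then I would look at $\mathcal{CP}_2$: by the clean-up invariant, $\mathcal{CP}_2$ has not expired (the oldest expired run would have been removed), or more precisely, we use Property~\textbf{S2} relating $\mathcal{CP}_1$ and $\mathcal{CP}_2$. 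There are two sub-cases. If the starting positions of $\mathcal{CP}_1$ and $\mathcal{CP}_2$ differ by a single interval, then $\mathcal{CP}_2$ starts essentially at $W$'s boundary or just inside, so $\mathcal{CP}_2$ sees (almost) all of $W$; combined with monotonicity and \textbf{C2}, $|\mathcal{CP}_2| \ge \frac{|OPT(W)|+1}{2}$ up to the one interval, and then $|\mathcal{CP}_1| \ge |\mathcal{CP}_2|$ would not directly follow — instead we use that in this sub-case $|\mathcal{CP}_1|$ could be larger, so we need the other relation. Let me restructure: the clean relation to exploit is that $\mathcal{CP}_2$ started before the window boundary is impossible since $\mathcal{CP}_1$ expired and is still present, meaning $\mathcal{CP}_2$ starting position is at or after $W$'s left boundary is what we want — actually the invariant only guarantees $\mathcal{CP}_1$ is the unique expired run, so $\mathcal{CP}_2$ starts inside $W$. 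Thus $\mathcal{CP}_2$ sees a suffix containing $W$, giving $|\mathcal{CP}_2| \ge \frac{|OPT(W)|+1}{2} \ge \frac{|OPT(W)|}{2}$. Now I combine this with Property~\textbf{S2} applied to $i=1$: either $|\mathcal{CP}_1| \ge \frac{|\mathcal{CP}_2|}{1+\varepsilon} \ge \frac{|OPT(W)|}{2(1+\varepsilon)}$, giving a $2(1+\varepsilon)$-approximation, or $|\mathcal{CP}_1| \ge (1+\varepsilon)|\mathcal{CP}_2|$ which is even better. Wait — \textbf{S2} says $|\mathcal{CP}_i| \le (1+\varepsilon)|\mathcal{CP}_{i+1}|$ OR the starting positions differ by one interval; so if the starting positions of $\mathcal{CP}_1,\mathcal{CP}_2$ differ by more than one interval we get $|\mathcal{CP}_1| \le (1+\varepsilon)|\mathcal{CP}_2|$, which is the wrong direction. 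So the genuinely delicate case is exactly when the start times of $\mathcal{CP}_1$ and $\mathcal{CP}_2$ differ by a single interval.

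In that delicate case, $\mathcal{CP}_2$ starts one interval after $\mathcal{CP}_1$, so $\mathcal{CP}_1$ sees everything $\mathcal{CP}_2$ sees plus one extra leading interval; but $\mathcal{CP}_1$ has expired while $\mathcal{CP}_2$ has not, which means that one extra interval is exactly $W$'s predecessor. So $\mathcal{CP}_2$ sees precisely the suffix starting at $W$'s left boundary — it sees exactly $W$ (or a window-aligned suffix), hence $|\mathcal{CP}_2| = |\mathcal{CP}(\text{suffix} \supseteq W)| \ge \frac{|OPT(W)|+1}{2}$. Now the output is $|\mathcal{CP}_1|$, and by monotonicity $|\mathcal{CP}_1| \ge |\mathcal{CP}_2| \ge \frac{|OPT(W)|+1}{2}$ — again a $2$-approximation! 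So actually in every case we seem to get roughly a $2$-approximation, which is too strong and signals an error: the subtlety I am glossing over is that after the clean-up, the oldest run $\mathcal{CP}_1$ might itself start well before $W$ AND the run that straddles $W$'s boundary could be $\mathcal{CP}_1$ with $\mathcal{CP}_2$ starting strictly inside $W$ and much weaker. The honest worst case, which yields the factor $4$, is: $\mathcal{CP}_1$ expired, $\mathcal{CP}_2$ starts strictly inside $W$, $|\mathcal{CP}_1| \le (1+\varepsilon)|\mathcal{CP}_2|$ (from \textbf{S2}), and $\mathcal{CP}_2$ only sees a strict sub-window, so $|OPT(\mathcal{CP}_2)|$ could be as small as roughly $|OPT(W)|/?$ — here we need the additional fact that $\mathcal{CP}_2$, although it does not see all of $W$, sees the part of $W$ after its start, and the part of $W$ before its start is covered by $\mathcal{CP}_1$'s region structure; combining, one shows $|\mathcal{CP}_1| + |\mathcal{CP}_2| \gtrsim |OPT(W)|$, hence $|\mathcal{CP}_1| \ge \frac{|OPT(W)|}{2(1+\varepsilon)(1+\text{something})}$, landing at $4+2\varepsilon$. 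The main obstacle, and the step I would spend the most care on, is precisely this: bounding $|OPT(W)|$ by a combination of $|\mathcal{CP}_1|$ and $|\mathcal{CP}_2|$ when $\mathcal{CP}_2$ starts strictly inside $W$ — split $OPT(W)$ into intervals entirely before $\mathcal{CP}_2$'s start, intervals entirely after, and at most one straddling interval; the "after" part is handled by $|\mathcal{CP}_2| \ge \frac{1}{2}|OPT(\text{suffix from }\mathcal{CP}_2\text{'s start})|$ via \textbf{C2}; the "before" part lies in $\mathcal{CP}_1$'s view, and one argues via \textbf{C1} that $\mathcal{CP}_1$'s regions cover it so that $|\mathcal{CP}_1| \ge \frac{1}{2}|OPT(\text{that prefix})|$ as well; putting the two halves together with the factor $1+\varepsilon$ slack from \textbf{S2} gives the claimed $4+2\varepsilon$. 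This is the warm-up template that the later $\frac{11}{3}+\varepsilon$ result refines.
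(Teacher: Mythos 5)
Your overall template is the right one and matches the paper's: handle the easy case where the oldest run starts at (or before) the window boundary via Property \textbf{C2}, and otherwise bound $|OPT(W)|$ by splitting the optimal intervals by arrival time, charging each piece to a run via \textbf{C2} and monotonicity, and combining with the $(1+\varepsilon)$ relation from \textbf{S2}. However, the sketch never closes the one step that actually needs care, and it stumbles in two concrete places.

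First, the timing of Property \textbf{S2}. The clean-up rule only guarantees $|\mathcal{CP}_0|\le(1+\varepsilon)|\mathcal{CP}_1|$ \emph{at the moment the two runs become adjacent} (when the run between them is deleted); it never re-enforces this relation afterwards, so for the current values of two adjacent runs the inequality can fail. This is precisely why the paper splits the suffix into \emph{three} parts $A\,B\,C$, where $C$ is everything arriving after adjacency, and applies \textbf{S2} as $|\mathcal{CP}(AB)|\le(1+\varepsilon)|\mathcal{CP}(B)|$, then pushes back up to $|\mathcal{CP}(BC)|$ by monotonicity. Your two-part before/after split implicitly uses $|\mathcal{CP}_1^{\mathrm{now}}|\le(1+\varepsilon)|\mathcal{CP}_2^{\mathrm{now}}|$, which the algorithm does not maintain; this is the missing idea. (Relatedly, your ``at most one straddling interval'' term confuses positions on the real line with positions in the stream: the split of $OPT$ is by arrival time, so there is nothing to straddle.) Second, the run you bound at the end is the wrong one: an expired run's maintained solution may contain intervals outside the window, so the output must be the oldest \emph{non-expired} run (your $\mathcal{CP}_2$, the paper's $\mathcal{CP}_1$), and from $|\mathcal{CP}_1|+|\mathcal{CP}_2|\gtrsim \tfrac12|OPT(W)|$ together with $|\mathcal{CP}_1|\le(1+\varepsilon)|\mathcal{CP}_2|$ the conclusion is a lower bound on $|\mathcal{CP}_2|$, not on $|\mathcal{CP}_1|$ as you wrote. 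With the $A\,B\,C$ decomposition and the chain $|OPT(ABC)|\le|OPT(A)|+|OPT(BC)|\le 2|\mathcal{CP}(AB)|+2|\mathcal{CP}(BC)|\le 2(1+\varepsilon)|\mathcal{CP}(BC)|+2|\mathcal{CP}(BC)|$ your argument becomes the paper's proof; as written, the central inequality is asserted (``one shows'') rather than established.
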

\begin{proof}
Recall that the output of Algorithm~\ref{alg:smooth-histogram} is the output of the oldest run of an instance of $\mathcal{CP}$, and let us denote this run by $\mathcal{CP}_1$. First, if the start position of  $\mathcal{CP}_1$ coincides with the oldest interval of the current sliding window then $\mathcal{CP}_1$ was run on the entire region and we immediately obtain an approximation factor of $2$ (Property \textbf{C2} of the $\mathcal{CP}$ algorithm). Hence, suppose that this is not the case, and denote the run that has expired most recently by $\mathcal{CP}_0$. Observe that, prior to $\mathcal{CP}_0$ expiring, the runs $\mathcal{CP}_0$ and $\mathcal{CP}_1$ were adjacent. Furthermore, the two runs differ by more than one interval since otherwise the starting position of $\mathcal{CP}_1$ would coincide with the oldest interval of the current sliding window. We now consider the suffix $S$ of intervals in the stream starting at the start position of run $\mathcal{CP}_0$. This suffix $S$ is partitioned into three parts $S = A B C$, where $A$ are the intervals that arrived prior to the starting position of $\mathcal{CP}_1$, $C$ are the intervals that arrived from the moment onward when the runs $\mathcal{CP}_0$ and $\mathcal{CP}_1$ became adjacent (either after a clean-up or they may have been adjacent from the moment onward when $\mathcal{CP}_1$ was created), and $B$ are the intervals between $A$ and $C$ (if there are any). 

We will prove that $|\mathcal{CP}_1| = |\mathcal{CP}(B C)| \ge |OPT(A B  C)| / (4 + 2 \cdot \varepsilon)$, which proves the result since the current sliding window is a suffix of $A B C$. Indeed, we have the following:
\begin{align*}
 |OPT(A  B C)| & = |OPT(A B C) \cap A | \\
 & \quad + |OPT(A B C) \cap (B \cup C)| \\
 & \le |OPT(A)| + |OPT(BC)| \\
 & \le 2 \cdot |\mathcal{CP}(A)| + 2 \cdot |\mathcal{CP}(B C)| & \text{Approx. factor of $\mathcal{CP}$ (Property ~\textbf{C2})}  \\
 & \le 2 \cdot |\mathcal{CP}(AB)|
 + 2 \cdot |\mathcal{CP}(BC)| & \text{Monotonicity, Lemma~\ref{lem:monotonic}}\\
 & \le 2 \cdot (1+\varepsilon) \cdot |\mathcal{CP}(B)|
 + 2 \cdot |\mathcal{CP}(BC)| & \text{Property~\textbf{S2}} \\
 & \le 2 \cdot (1+\varepsilon) \cdot |\mathcal{CP}(BC)|
 + 2 \cdot |\mathcal{CP}(BC)| & \text{Monotonicity, Lemma~\ref{lem:monotonic}} \\
 & \le (4+ 2\varepsilon) \cdot |\mathcal{CP}(BC)| = (4+2 \varepsilon) \cdot |\mathcal{CP}_1| \ . 
\end{align*}
\end{proof}

\subsubsection{Sliding Window Algorithm \label{sec:alg-arbitrary-length}}

We will expand upon the smooth histogram method as described in Algorithm~\ref{alg:arbitrary-length}.
The key idea is to exploit the structure of the regions created by the runs of $\mathcal{CP}$ in the smooth histogram algorithm. 
Based on these regions, we instantiate additional runs that target areas in which we expect to find many optimal intervals. 

\begin{algorithm}

Whenever two runs $\mathcal{CP}_i$ and $\mathcal{CP}_{i+1}$ in Algorithm~\ref{alg:smooth-histogram} become adjacent (either because of the clean-up operation or because a new run was created), proceed as follows:

\vspace{0.15cm}

Denote by $R_1, \dots, R_{\ell}$ the regions created by $\mathcal{CP}_i$ thus far. 

\begin{enumerate}
    \item For each region $R_i$, we initiate a new run of $\mathcal{CP}$, i.e., this run only considers subsequent arriving intervals that lie within $R_i$. \vspace{0.1cm}
    \item For each pair of consecutive regions $R_i, R_{i+1}$, we initiate a new run of $\mathcal{CP}$ that considers all subsequent intervals that lie within the merged region $R_i R_{i+1}$ (the region consisting of the left boundary of $R_i$ and the right boundary of $R_{i+1}$). \vspace{0.1cm}
    \item \textbf{Clean-up:} The additional runs established in Steps~1 and~2 are associated with the run $\mathcal{CP}_{i+1}$, and whenever $\mathcal{CP}_{i+1}$ is deleted then all associated runs are also deleted. \vspace{0.1cm}
    \item \textbf{Output:} The output is generated from the oldest active run of $\mathcal{CP}$ together with its associated runs as in steps $1$ and $2$ (see the proofs of Lemmas~\ref{lem:case-1} and \ref{lem:case-2}).
\end{enumerate}    

 \caption{$(\frac{11}{3}+\varepsilon)$-approximation Algorithm for \textsf{Interval Selection} on arbitrary-length intervals \label{alg:arbitrary-length}}
 \end{algorithm}

We will now proceed and analyse   Algorithm~\ref{alg:arbitrary-length}. 
To this end, we consider any fixed current sliding window.

First, similar to the analysis of Algorithm~\ref{alg:smooth-histogram}, we note that if the starting position of the oldest run of $\mathcal{CP}$, denoted $\mathcal{CP}_1$, coincides with the left delimiter of the sliding window then we immediately obtain a $2$-approximation (by Property \textbf{C2}). Suppose thus that this is not the case. Again, we consider the run $\mathcal{CP}_0$, which is the latest run that has expired and was previously adjacent to $\mathcal{CP}_1$. We also consider the suffix of intervals $S = A B C$, where $A$ are the intervals starting at the starting position of $\mathcal{CP}_0$ and ending before the starting position of $\mathcal{CP}_1$, $C$ are the intervals that occurred after $\mathcal{CP}_0$ and $\mathcal{CP}_1$ became adjacent, and $B$ are the remaining intervals. Let $OPT = OPT(ABC)$,  let $OPT_A = OPT \cap A$, and define $OPT_B$ and $OPT_C$ similarly. Since the current sliding window is a subset of $A  B  C$, we have that an optimal solution in the current sliding window is of size at most $OPT$.

In Algorithm~\ref{alg:arbitrary-length}, we run the smooth histogram algorithm, Algorithm~\ref{alg:smooth-histogram}, with respect to a parameter $\beta > 0$ (i.e., replace parameter $\varepsilon$ in the listing with $\beta$). Then, as proved in Theorem~\ref{thm:smooth-histogram}, we always have at least a $(4+2\beta)$-approximation at our disposal, i.e., 
$$|\mathcal{CP}_1| = |\mathcal{CP}(BC)| \le \frac{|OPT|}{4+2\beta} \ . $$ 
We define $\varepsilon' \ge 0$ such that 
\begin{equation} \label{eqn:123}
    |\mathcal{CP}(BC)| = \frac{|OPT|}{4 + 2\beta-\varepsilon'}  \ .
\end{equation}
In the following, we will argue that if $\varepsilon'$ is close to $0$ then we can find a better solution using the runs associated with $\mathcal{CP}_1$.

Let $\mathcal{R}$  be the regions created by $\mathcal{CP}_0$ at the moment when $\mathcal{CP}_0$ and $\mathcal{CP}_1$ became adjacent, i.e., the regions created by the run $\mathcal{CP}(AB)$. 
By Property \textbf{C2}, we have $\ell := |\mathcal{R}| = |\mathcal{CP}(AB)|$. 
For each region $R_i \in \mathcal{R}$, let $x_i = |OPT_C \cap R_i|$, i.e., the number of optimal intervals in $C$ that lie within the region $R_i$. Furthermore, we define $X := \sum_{i=1}^{\ell} x_i$. 

In the next lemma, we prove that, provided  $\varepsilon'$ is small, the quantity $X$ is necessarily large, i.e., there are many optimal intervals in $C$ that lie within the regions $R_i$. We will later argue that the associated runs with $\mathcal{CP}_1$ can then be used to find many of these.

\begin{lemma}
\label{lemma:sizeofX}
$$X \ge \frac{2 -\varepsilon'}{1+\beta} \cdot  \ell \ . $$
\end{lemma}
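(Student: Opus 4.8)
The plan is to lower-bound $X = \sum_i x_i = |OPT_C \cap (\bigcup_i R_i)|$ by relating it to $|OPT|$, $|OPT_A|$, and the sizes of the various $\mathcal{CP}$ runs. First I would observe that the regions $R_1, \dots, R_\ell$ created by $\mathcal{CP}(AB)$ partition the real line, so every interval of $OPT_C$ either lies entirely within some $R_i$ (these are the $X$ intervals counted) or crosses a region boundary. The intervals of $OPT_C$ that cross a region boundary are pairwise disjoint (being a subset of $OPT$) and there are $\ell - 1$ internal boundaries, so at most $\ell - 1$ of them cross a boundary; hence $|OPT_C| \le X + (\ell - 1) \le X + \ell$.

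Next I would chain together the estimates used in the proof of Theorem~\ref{thm:smooth-histogram}. We have $|OPT| = |OPT_A| + |OPT_B| + |OPT_C| \le |OPT_A| + |OPT(BC)| \le |OPT_A| + 2|\mathcal{CP}(BC)|$ by Property \textbf{C2} — wait, more carefully: $|OPT| \le |OPT_A| + |OPT_C| + |OPT_B|$ and $|OPT_B| + |OPT_C| \le |OPT(BC)| \le 2|\mathcal{CP}(BC)|$, so $|OPT| \le |OPT_A| + 2|\mathcal{CP}(BC)|$. Then I would instead split off $OPT_C$ directly: $|OPT| \le |OPT_A| + |OPT_B| + |OPT_C| \le |OPT(AB)| + |OPT_C| \le 2\ell + |OPT_C| \le 2\ell + X + \ell = 3\ell + X$, using $|OPT(AB)| \le 2|\mathcal{CP}(AB)| = 2\ell$. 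That already gives $X \ge |OPT| - 3\ell$, so I need a good upper bound on $\ell$ in terms of $|OPT|$, $\beta$, $\varepsilon'$.

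For that, I would use $\ell = |\mathcal{CP}(AB)|$ together with Property~\textbf{S2} applied to the adjacent runs $\mathcal{CP}_0 = \mathcal{CP}(AB \cdots)$ and $\mathcal{CP}_1 = \mathcal{CP}(BC)$: since these runs differ by more than a single interval (as noted in the setup, otherwise the start of $\mathcal{CP}_1$ would be the window's left delimiter), Property~\textbf{S2} forces $|\mathcal{CP}(AB)| \le (1+\beta)|\mathcal{CP}(B)|$, and by monotonicity (Lemma~\ref{lem:monotonic}) $|\mathcal{CP}(B)| \le |\mathcal{CP}(BC)|$, hence $\ell \le (1+\beta)|\mathcal{CP}(BC)|$. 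Substituting the definition $|\mathcal{CP}(BC)| = |OPT|/(4+2\beta-\varepsilon')$ from Equation~\eqref{eqn:123} gives $3\ell \le 3(1+\beta)|OPT|/(4+2\beta-\varepsilon')$, and therefore
$$X \ge |OPT| - 3\ell \ge |OPT|\left(1 - \frac{3(1+\beta)}{4+2\beta-\varepsilon'}\right) = |OPT| \cdot \frac{1 + 2\beta - \varepsilon' - 3}{4+2\beta-\varepsilon'} \ . $$
Hmm, that numerator is $-2 + 2\beta - \varepsilon' + \cdots$ — this does not obviously reduce to $\frac{2-\varepsilon'}{1+\beta}\ell$, so I expect the actual argument needs to keep things in terms of $\ell$ rather than $|OPT|$, i.e. I should aim for $X \ge \text{(something)} \cdot \ell$ directly. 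Rewriting with $|OPT| \le 3\ell + X$ and $|OPT| = (4+2\beta-\varepsilon')|\mathcal{CP}(BC)| \ge (4+2\beta-\varepsilon')\ell/(1+\beta)$ gives $3\ell + X \ge \frac{4+2\beta-\varepsilon'}{1+\beta}\ell$, i.e. $X \ge \left(\frac{4+2\beta-\varepsilon'}{1+\beta} - 3\right)\ell = \frac{4+2\beta-\varepsilon' - 3 - 3\beta}{1+\beta}\ell = \frac{1-\beta-\varepsilon'}{1+\beta}\ell$, which is still not quite the claimed $\frac{2-\varepsilon'}{1+\beta}\ell$.

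The discrepancy tells me the bound $|OPT(AB)| \le 2\ell$ is being applied too loosely and the real proof must use a sharper counting: probably $|OPT_A| \le \ell$ rather than $|OPT_A| \le 2\ell$, exploiting that $A$'s optimal intervals and the region structure interact via Property~\textbf{C1} (no two disjoint intervals of the input lie within a single $R_i$, so $OPT_A$ has at most one interval per region plus boundary-crossers, giving $|OPT_A| \le \ell + (\ell-1) \le 2\ell$ — still $2\ell$, hmm) — so more likely one combines $|OPT_B| + |OPT_C|$ against $|\mathcal{CP}(BC)|$ while treating $OPT_A$ via $|OPT_A| + |\mathcal{CP}(B)| \ge \cdots$. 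The main obstacle, and where I would focus the effort, is therefore the careful bookkeeping: getting the tight relationship between $|OPT_A|$, $\ell$, the boundary-crossing intervals of $OPT_C$, and $|\mathcal{CP}(BC)|$ so that the constant comes out to exactly $\frac{2-\varepsilon'}{1+\beta}$. Concretely I would (i) write $|OPT| = |OPT_A| + |OPT_B| + |OPT_C|$; (ii) bound $|OPT_A| + |OPT_B| \le |OPT(AB)| \le 2\ell$ — no: rather bound $|OPT_A| \le |OPT(A)| \le 2|\mathcal{CP}(A)| \le 2|\mathcal{CP}(AB)| = 2\ell$ and separately note $|OPT_B| + |OPT_C| \le |OPT(BC)| \le 2|\mathcal{CP}(BC)|$; (iii) also bound $|OPT_C| \le X + (\ell - 1)$; and (iv) combine whichever subset of these with Equation~\eqref{eqn:123} and $\ell \le (1+\beta)|\mathcal{CP}(BC)|$ to extract the stated inequality, adjusting which bound on $|OPT_C|$ versus $|OPT_B|+|OPT_C|$ is used so the algebra closes. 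I expect the clean route is: $|OPT| \le |OPT_A| + |OPT_B| + |OPT_C| \le 2\ell + (\ell - 1 + X) < 3\ell + X$ is too wasteful, and instead $|OPT| \le \ell + \ell + (\ell - 1 + X)$ using $|OPT_A| \le \ell$ and $|OPT_B| \le \ell$ separately (each from Property~\textbf{C1}-type per-region counting on the regions of $\mathcal{CP}(AB)$, since $A$-intervals and $B$-intervals of $OPT$ lie within regions that each contain no two disjoint input intervals), giving $|OPT| \le 3\ell - 1 + X$, hence $X \ge |OPT| - 3\ell + 1$; then with $|OPT| = (4+2\beta - \varepsilon')|\mathcal{CP}(BC)| \ge (4+2\beta-\varepsilon')\ell/(1+\beta)$ one gets $X \ge \left(\frac{4+2\beta-\varepsilon'}{1+\beta}-3\right)\ell + 1$. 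Since even this gives $\frac{1-\beta-\varepsilon'}{1+\beta}$, I conclude the correct argument must avoid charging $OPT_B$ at all — i.e. $|OPT_B|$ should be absorbed into $|\mathcal{CP}(BC)|$ via $|OPT_B| + |OPT_C \text{ crossing boundaries of } \mathcal{CP}(BC)\text{'s regions}|$ — and the final plan is to run the region-counting argument with respect to the regions of $\mathcal{CP}_1 = \mathcal{CP}(BC)$ for the $B,C$ parts and with respect to the regions of $\mathcal{CP}(AB)$ for the $A$ part, making $X$ appear as the slack between $|\mathcal{CP}(BC)|$ and $|OPT(BC)|$; I would present exactly that, and the hard part is verifying the two region systems are compatible enough for the counting to combine cleanly into the factor $\frac{2-\varepsilon'}{1+\beta}$.
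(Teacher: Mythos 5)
There is a genuine gap: you never arrive at the counting inequality that makes the constant come out right, namely $|OPT| \le X + 2\ell$. The paper's proof classifies \emph{all} of $OPT = OPT_A \cup OPT_B \cup OPT_C$ against the single region partition $\mathcal{R}$ of $\mathcal{CP}(AB)$: (i) at most $\ell$ intervals of $OPT$ can cross a region boundary, because $OPT$ is one independent set and each boundary is crossed by at most one of its intervals; (ii) among the intervals of $OPT$ lying entirely inside some region, those coming from $A \cup B$ number at most one per region, i.e. at most $\ell$ in total, by Property \textbf{C1} applied to the run $\mathcal{CP}(AB)$ (no two disjoint input intervals of $AB$ lie in a single region); (iii) the remaining within-region intervals come from $C$ and are exactly the $X$ intervals. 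Your decompositions instead bound $|OPT_A| + |OPT_B| \le |OPT(AB)| \le 2\ell$ and separately $|OPT_C| \le X + \ell$, which charges the boundary-crossing budget twice (once inside the $2\ell$ for $A \cup B$, once again for $C$) and leaves you at $|OPT| \le 3\ell + X$, hence the weaker $X \ge \frac{1-\beta-\varepsilon'}{1+\beta}\ell$ that you correctly notice does not match the claim.

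You do have the other two ingredients right ($\ell = |\mathcal{CP}(AB)| \le (1+\beta)|\mathcal{CP}(B)| \le (1+\beta)|\mathcal{CP}(BC)|$ via Property \textbf{S2} and monotonicity, and the substitution of Equation~\eqref{eqn:123}), and with $|OPT| \le X + 2\ell$ these immediately give $(4+2\beta-\varepsilon')\ell \le (1+\beta)(X+2\ell)$, i.e. $X \ge \frac{2-\varepsilon'}{1+\beta}\ell$. However, your proposed repair --- running the counting against two different region systems (those of $\mathcal{CP}(AB)$ for the $A$ part and those of $\mathcal{CP}(BC)$ for the $B,C$ parts) and hoping they are ``compatible'' --- is not the argument needed and would add complications the lemma does not require; the fix is simply to do the three-way classification above against the one partition $\mathcal{R}$.
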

\begin{proof}
Observe that $|OPT| \le X + 2 \ell$, since at most $\ell$ intervals of $OPT$ can intersect the region boundaries $\mathcal{R}$, another $\ell$ intervals of $OPT_{A} \cup OPT_B$ can lie within the $\ell$ regions, and the remaining ones are the $X$ intervals of $OPT_C$.

Then, using  Property~\textbf{S2}, Lemma~\ref{lem:monotonic}, and  Inequality~\ref{eqn:123}, we obtain: 

$$\ell = |\mathcal{CP}(AB)| \le (1 + \beta) |\mathcal{CP}(B)| \le (1+\beta) |\mathcal{CP}(BC)| = \frac{(1+\beta)|OPT|}{4 +2\beta -\varepsilon'}  \le   \frac{(1+\beta)(X + 2\ell)}{4 +2\beta -\varepsilon'}  \ ,$$ 

which implies the result.
\end{proof}

Consider now the run $\mathcal{CP}(B)$, which coincides with the run $\mathcal{CP}_1$ until $\mathcal{CP}_0$ and $\mathcal{CP}_1$ became adjacent. Let $B_1$ be the intervals computed by this run that do not intersect the boundaries of $\mathcal{R}$ and let $B_2$ be the intervals computed by this run that intersect the boundaries of $\mathcal{R}$. Then, since $|B_1| + |B_2| =|\mathcal{CP}(B)|$, we have that either $|B_1| \ge \frac{1}{3} |\mathcal{CP}(B)|$ or $|B_2| \ge \frac{2}{3} |\mathcal{CP}(B)|$. We treat both cases separately in Lemmas~\ref{lem:case-1} and \ref{lem:case-2}:

\begin{lemma} \label{lem:case-1}
    Suppose that $|B_1| \ge \frac{1}{3} |\mathcal{CP}(B)|$. Then, using the associated runs of $\mathcal{CP}_1$, we can output a solution of size at least
     $\frac{7  - 3\varepsilon'}{6(1+\beta)} \ell  \ .$
\end{lemma}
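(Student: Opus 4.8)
## Proof Proposal for Lemma~\ref{lem:case-1}

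The plan is to exploit the case hypothesis $|B_1| \ge \frac{1}{3}|\mathcal{CP}(B)|$ together with Lemma~\ref{lemma:sizeofX}, which guarantees $X \ge \frac{2-\varepsilon'}{1+\beta}\ell$ optimal intervals of $OPT_C$ distributed among the regions $R_1,\dots,R_\ell$. The key structural observation is that the $|B_1|$ intervals computed by the run $\mathcal{CP}(B)$ that do \emph{not} cross any boundary of $\mathcal{R}$ each lie strictly inside some region $R_i$. By Property~\textbf{C1} of the $\mathcal{CP}$ algorithm applied to this run, each region of $\mathcal{CP}(B)$ contains no two disjoint intervals; in particular, the regions of $\mathcal{R}$ that contain one of the $B_1$-intervals are ``blocked'' in the sense that any single interval of $OPT_C$ falling in such a region is the \emph{only} optimal interval there (it cannot be joined by a disjoint partner inside that region). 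So the regions split into two types: those ``touched'' by $B_1$ (at most $|B_1|$ of them, and in each, $x_i \le 1$), and the remaining at least $\ell - |B_1|$ ``untouched'' regions, where $x_i$ may be larger.

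First I would partition the count $X = \sum_i x_i$ according to this split. On the touched regions the contribution is at most $|B_1|$. Hence the untouched regions carry at least $X - |B_1|$ optimal intervals of $OPT_C$. Now I invoke the associated runs of $\mathcal{CP}_1$: for each region $R_i$ of $\mathcal{R}$ there is a dedicated run of $\mathcal{CP}$ restricted to $R_i$ that processes exactly the intervals of $C$ landing in $R_i$. By Property~\textbf{C2}, that run reports a solution of size at least $\frac{x_i + 1}{2} \ge \frac{x_i}{2}$ — but crucially, if $x_i \le 1$ (which will be relevant in the companion analysis) it reports at least $\max(1, x_i)$, i.e.\ exactly $x_i$ when $x_i \in \{0,1\}$ up to the ``$+1$'' slack, and more importantly when $x_i \ge 2$ it still yields $\ge \frac{x_i+1}{2}$. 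Summing the outputs of the per-region associated runs over the untouched regions, and noting these partial solutions live in disjoint regions and hence are globally independent, we obtain an independent set of size at least $\sum_{i \text{ untouched}} \frac{x_i+1}{2}$. Combining with the independent set $B_1$ itself — whose intervals are disjoint from everything the per-region runs on \emph{untouched} regions produce, since $B_1$ lives in touched regions — we can take the union of $B_1$ with the per-region solutions on untouched regions, giving size at least $|B_1| + \frac{1}{2}\bigl(X - |B_1|\bigr) = \frac{1}{2}(X + |B_1|)$.

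Then I would plug in the bounds: $|B_1| \ge \frac{1}{3}|\mathcal{CP}(B)| \ge \frac{1}{3}|\mathcal{CP}(BC)|^{-1}$... more precisely, using Property~\textbf{S2} and monotonicity, $|\mathcal{CP}(B)| \ge \frac{\ell}{1+\beta}$ (since $\ell = |\mathcal{CP}(AB)| \le (1+\beta)|\mathcal{CP}(B)|$), so $|B_1| \ge \frac{\ell}{3(1+\beta)}$, and $X \ge \frac{2-\varepsilon'}{1+\beta}\ell$ from Lemma~\ref{lemma:sizeofX}. Therefore the output size is at least
\begin{equation*}
\frac{1}{2}\left( \frac{2-\varepsilon'}{1+\beta}\ell + \frac{\ell}{3(1+\beta)} \right) = \frac{1}{2(1+\beta)}\left( 2 - \varepsilon' + \tfrac{1}{3} \right)\ell = \frac{7 - 3\varepsilon'}{6(1+\beta)}\,\ell,
\end{equation*}
which is exactly the claimed bound.

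The main obstacle I anticipate is the bookkeeping of disjointness when forming the final union: one must be careful that the intervals of $B_1$ (which sit inside the touched regions of $\mathcal{R}$) together with the per-region associated-run solutions on the untouched regions genuinely form one independent set, and that the ``$x_i \le 1$ on touched regions'' claim is airtight — it rests on applying Property~\textbf{C1} to the run $\mathcal{CP}(B)$, noting a $B_1$-interval inside $R_i$ certifies that all of $C$'s intervals in that sub-part of $R_i$ pairwise intersect, which must be reconciled with the fact that the region structure of $\mathcal{CP}(B)$ refines $\mathcal{R}$ only on the portions it has split. A secondary subtlety is whether we should also throw in the merged-region runs from Step~2 of Algorithm~\ref{alg:arbitrary-length} to squeeze the constant; I expect Step~1 runs alone suffice for the $\frac{7-3\varepsilon'}{6(1+\beta)}$ bound in this case, with the merged runs reserved for Lemma~\ref{lem:case-2}.
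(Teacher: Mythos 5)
Your final arithmetic and the intermediate target $\frac{1}{2}(X+|B_1|)$ coincide with the paper's, but the route you take to that bound has a genuine gap. You claim that every region $R_i$ ``touched'' by a $B_1$-interval satisfies $x_i\le 1$, deducing this from Property \textbf{C1} applied to the run $\mathcal{CP}(B)$. That deduction does not work: \textbf{C1} for $\mathcal{CP}(B)$ only says that the intervals of the stream $B$ seen so far contain no two disjoint members within each region of $\mathcal{CP}(B)$'s \emph{own} partition. The intervals of $OPT_C$ arrive later, in $C$, and are not constrained by that run at all; moreover, the $\mathcal{CP}(B)$-region containing a $B_1$-interval is in general a proper sub-part of $R_i$. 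So a touched region can contain many pairwise disjoint intervals of $OPT_C$. Since your accounting discards the per-region associated runs on touched regions (keeping only the $B_1$-interval there) and bounds $\sum_{i\ \mathrm{untouched}} x_i \ge X - |B_1|$, the failure of ``$x_i\le 1$ on touched regions'' allows all of $X$ to concentrate in touched regions, collapsing your guarantee to roughly $|B_1|\approx \ell/(3(1+\beta))$, far below the claimed $\frac{7-3\varepsilon'}{6(1+\beta)}\ell$.

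The paper's accounting avoids this entirely by keeping the per-region run on \emph{every} region: a region with $x_i\ge 1$ contributes $\frac{x_i+1}{2}=\frac{x_i}{2}+\frac{1}{2}$ by \textbf{C2}, and a good region with $x_i=0$ contributes its $B_1$-interval instead, i.e.\ $1\ge\frac{1}{2}$; summing over all regions gives $\frac{X}{2}+\frac{1}{2}\cdot\#\{\text{good regions}\}=\frac{X+|B_1|}{2}$ with no assumption on how $OPT_C$ is distributed. (Identifying the number of good regions with $|B_1|$ uses that distinct $B_1$-intervals, being disjoint, cannot lie in the same $R_i$ --- which follows from \textbf{C1} applied to $\mathcal{CP}(AB)$, not to $\mathcal{CP}(B)$.) Your disjointness bookkeeping for the final union and the numerical substitution at the end are fine; only the distributional claim needs to be replaced by this accounting.
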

\begin{proof}
We call a region $R_i$ {\em good} if it contains an interval from $B_1$. 
We output the solution obtained from the runs of $\mathcal{CP}$ on $R_i$, for all $i$, and if such a run on a good region leads to no intervals (i.e. $x_i = 0$), then we output the interval from $B_1$ instead. Recall that $\mathcal{CP}$ outputs a solution of size $\frac{x_i + 1}{2}$ if $x_i \neq 0$ (Property \textbf{C2}), and we stress here that the additive $+1$ is key for our analysis. 
We thus obtain a solution of size at least:

$$S = \sum_{R_i \mbox{ good}} \max \left\{ \frac{x_i + 1}{2},  1 \right\}  + \sum_{R_i \mbox{ bad}, x_i \neq 0} \frac{x_i+1}{2}  \ .$$

Recall that $\sum^\ell_{i=1} x_i = X \ge \frac{2-\varepsilon'}{1+\beta} \cdot \ell$. 
Hence,
\begin{align*}   
    S  &\ge \sum_{R_i \mbox{ good}} \frac{x_i + 1}{2}  + \sum_{R_i \mbox{ bad}, x_i \neq 0} \frac{x_i+1}{2} \\
    &\ge \frac{X+|B_1|}{2}  & |B_1| \text{ is the number of good regions}\\
    & \ge \frac{2  -\varepsilon'}{2(1+\beta)} \ell + \frac{1}{6}|\mathcal{CP}(B)| & \text{By Lemma \ref{lemma:sizeofX}}\\
    & \ge \frac{2  -\varepsilon'}{2(1+\beta)} \ell+ \frac{\ell}{6(1+\beta)}  & |\mathcal{CP}(B)| \ge  \frac{|\mathcal{CP}(AB)|}{1+\beta} = \frac{\ell}{1+\beta} \text{ by Prop. \textbf{S2}} \\
    & \ge \frac{7  - 3\varepsilon'}{6(1+\beta)} \ell \ .    
\end{align*}
\end{proof}

\begin{lemma}
\label{lem:case-2}
Suppose that $|B_2| \ge \frac{2}{3} |\mathcal{CP}(B)|$. Then, using the associated runs of $\mathcal{CP}_1$, we can output a solution of size at least
    $ \frac{7  - 3\varepsilon'}{6(1+\beta)} \ell \ .$
\end{lemma}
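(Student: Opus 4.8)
The plan is to mirror the proof of Lemma~\ref{lem:case-1}, but to extract the additive bonus terms from the merged-region runs of Step~2 of Algorithm~\ref{alg:arbitrary-length} instead of from single-region runs, at the price of a factor of two (absorbed by the fact that we are now in the case $|B_2| \ge \tfrac23|\mathcal{CP}(B)|$ rather than $|B_1| \ge \tfrac13|\mathcal{CP}(B)|$).

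First I would set up the bookkeeping on the boundaries of $\mathcal{R}$. The intervals of $B_2$ are pairwise disjoint, since they belong to the output of the single run $\mathcal{CP}(B)$, and each of them crosses at least one boundary between two consecutive regions $R_i, R_{i+1}$ of $\mathcal{R}$. As two disjoint intervals cannot both contain a common boundary point, distinct intervals of $B_2$ cross disjoint sets of boundaries, so at least $|B_2|$ of the $\ell-1$ boundaries of $\mathcal{R}$ are crossed. Regarding each crossed boundary as an edge of the path $R_1 - R_2 - \dots - R_\ell$, I would pick a matching $M$ among these edges with $|M| \ge |B_2|/2$; this yields $|M|$ pairwise region-disjoint consecutive pairs $(R_i,R_{i+1})$, each of which is crossed by some interval of $B_2$.

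Next I would assemble the output. For every matched pair $(R_i,R_{i+1}) \in M$, I use the Step-2 run of $\mathcal{CP}$ on the merged region $R_i R_{i+1}$ (which processes the intervals of $C$ lying inside it): the $x_i$ intervals of $OPT_C$ in $R_i$ together with the $x_{i+1}$ intervals of $OPT_C$ in $R_{i+1}$ form $x_i+x_{i+1}$ pairwise disjoint intervals of $C$ inside this merged region, so by Property~\textbf{C2} the run outputs at least $\frac{x_i+x_{i+1}+1}{2}$ whenever $x_i+x_{i+1}\ge 1$; if the run is empty (which forces $x_i+x_{i+1}=0$), I instead add the $B_2$ interval crossing the $R_i/R_{i+1}$ boundary, contributing $1$. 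Either way the pair contributes at least $\frac{x_i+x_{i+1}}{2}+\frac12$. For every region $R_j$ not covered by $M$, I use the Step-1 run of $\mathcal{CP}$ on $R_j$, which by Property~\textbf{C2} contributes at least $\frac{x_j+1}{2} \ge \frac{x_j}{2}$ when $x_j\ge 1$, and $\ge \frac{x_j}{2}$ trivially otherwise. Since the matched pairs are region-disjoint and the single regions are pairwise disjoint, this collection is an independent set, and since every region is covered by exactly one matched pair or by none, its size is
\[
S \ \ge\ \frac12\sum_{j=1}^{\ell} x_j \,+\, \frac{|M|}{2} \ =\ \frac{X}{2}+\frac{|M|}{2} \ \ge\ \frac{X}{2}+\frac{|B_2|}{4}.
\]
Plugging in $X \ge \frac{2-\varepsilon'}{1+\beta}\ell$ (Lemma~\ref{lemma:sizeofX}) and $|B_2| \ge \frac23|\mathcal{CP}(B)| \ge \frac{2}{3}\cdot\frac{\ell}{1+\beta}$ (the case hypothesis together with Property~\textbf{S2}) gives $S \ge \frac{2-\varepsilon'}{2(1+\beta)}\ell + \frac{\ell}{6(1+\beta)} = \frac{7-3\varepsilon'}{6(1+\beta)}\ell$, which matches Lemma~\ref{lem:case-1}; the thresholds $\tfrac13$ and $\tfrac23$ in the two cases are chosen precisely so that $\tfrac13\cdot\tfrac12 = \tfrac23\cdot\tfrac14 = \tfrac16$.

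I expect the main obstacle to be this structural bookkeeping rather than the arithmetic: one must ensure that the $B_2$ interval assigned to a matched pair $(R_i,R_{i+1})$ actually lies inside the merged region $R_iR_{i+1}$ (a priori a $B_2$ interval may cross several $\mathcal{R}$-boundaries and thus span more than two regions, in which case it is not contained in any single Step-2 merged region), and that the union of the Step-1 outputs, the Step-2 outputs, and the manually added straddling intervals really is a single independent set. Pinning down the right choice of the matching $M$ so that each selected pair comes equipped with a usable straddler contained in it, and verifying the disjointness of the combined output, is where the care lies; the counting arguments above then go through unchanged.
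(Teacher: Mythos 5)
Your proof is correct and takes essentially the same route as the paper's: where the paper realizes the factor-$2$ loss by splitting $B_2$ according to the parity of the crossed region boundary and using the fixed pairs $(R_{2k},R_{2k+1})$, you extract $\ge |B_2|/2$ pairwise disjoint witnessed pairs via a matching in the path $R_1-\dots-R_\ell$, but the core count $S \ge \frac{X}{2}+\frac{|B_2|}{4}$ using the additive $+1$ of Property~\textbf{C2} (with the straddling $B_2$ interval as fallback for an empty merged run) is identical to the paper's bound $S \ge \frac{X+|B^1_2|}{2}$. The bookkeeping concern you flag at the end---that a $B_2$ interval may span more than two regions, so the fallback interval need not be contained in the merged region and could clash with outputs of neighbouring runs---arises in exactly the same form in the paper's parity-based argument (which implicitly restricts $B_2^1$ to intervals lying within two consecutive regions), so it is not an additional obstacle introduced by your variant.
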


\begin{proof}
 Let $B_2^1 \subseteq$ be the intervals of $B_2$ that lie on the boundary of two regions $R_i R_{i+1}$ where $i$ is even, and let $B_2^2 = B_2 \setminus B_2^1$. Then, either $|B_2^1| \ge \frac{1}{2} |B_2|$  or $|B_2^2| \ge \frac{1}{2}|B_2|$. 

    Suppose that $|B_2^1| \ge \frac{1}{2} |B_2|$. We only analyse this case since the other case is similar.

    We call an even index $i$ {\em good} if there is an interval in $B_2^1$ that lies on $R_i R_{i+1}$. We consider the runs of $\mathcal{A}$ on pairs of regions $R_i R_{i+1}$ where $i$ is even. Then, we find a solution of size:
    
   \begin{align*}
    S &=  \sum_{2k \mbox{ good}} \max \left\{ \frac{x_{2k} + x_{2k+1} + 1}{2}, 1 \right\} + \sum_{\substack{2k \mbox{ bad} \\ x_{2k} + x_{2k+1} \neq 0}} \frac{x_{2k} + x_{2k+1} + 1}{2}\\
     & \ge \sum_{2k \mbox{ good}} \frac{x_{2k} + x_{2k+1} + 1}{2}  + \sum_{\substack{2k \mbox{ bad} \\ x_{2k} + x_{2k+1} \neq 0}} \frac{x_{2k} + x_{2k+1} + 1}{2} \ . 
   \end{align*}   
   Using the identity $X = \sum_{i=1}^\ell x_i$ and that $|B^1_2|$ is the number of good regions and proceeding as in the proof of Lemma~\ref{lem:case-1}, we obtain:
    \begin{align*}      
      S & \ge  \frac{X + |B^1_2|}{2} 
      \ge \frac{2   -\varepsilon'}{2(1+\beta)} \ell + \frac{1}{6}|\mathcal{CP}(B)| \ge \dots \ge 
    \frac{7  - 3\varepsilon'}{6(1+\beta)} \ell \ .
    \end{align*}   
    
\end{proof}

\begin{theorem} \label{thm:ub-arbitrary-length}
  
  For any constant $\delta > 0$, Algorithm~\ref{alg:arbitrary-length} is a $(11/3 + \delta)$-approximation sliding window algorithm for \textsf{Interval Selection} on arbitrary-length intervals that uses space $\OrderT(|OPT|)$.
\end{theorem}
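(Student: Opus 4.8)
The plan is to combine the $(4+2\beta)$-approximation guarantee from Theorem~\ref{thm:smooth-histogram} with the improved guarantee of $\frac{7-3\varepsilon'}{6(1+\beta)}\ell$ obtained from Lemmas~\ref{lem:case-1} and~\ref{lem:case-2}, and then choose $\beta$ small enough so that the combined guarantee is at most $\frac{11}{3}+\delta$. Recall that $\varepsilon'$ was defined so that $|\mathcal{CP}(BC)| = \frac{|OPT|}{4+2\beta-\varepsilon'}$, and that whenever the oldest run spans the whole sliding window we already have a $2$-approximation, so we may assume throughout that we are in the non-trivial case where $\mathcal{CP}_0$, $\mathcal{CP}_1$ are as set up above.

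The key observation is that we have two bounds on the size of the output solution as functions of $\varepsilon'$: from $|\mathcal{CP}_1| = |\mathcal{CP}(BC)| = \frac{|OPT|}{4+2\beta-\varepsilon'}$ we get a solution of size $\frac{|OPT|}{4+2\beta-\varepsilon'}$, which is good when $\varepsilon'$ is large; and from Lemmas~\ref{lem:case-1} and~\ref{lem:case-2} together with $\ell \ge \frac{|OPT|}{2(4+2\beta-\varepsilon')} \cdot \frac{4+2\beta-\varepsilon'}{1+\beta}$ — more precisely, using $\ell = |\mathcal{CP}(AB)| \le (1+\beta)|\mathcal{CP}(BC)| = \frac{(1+\beta)|OPT|}{4+2\beta-\varepsilon'}$, but we actually need a lower bound on $\ell$, so I would instead observe that the $(4+2\beta)$-bound already gives $|OPT| \le (4+2\beta)|\mathcal{CP}(BC)|$ and combine it with $\ell \ge \dots$ — we get a solution of size at least $\frac{7-3\varepsilon'}{6(1+\beta)}\ell$, which is good when $\varepsilon'$ is small. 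The approximation factor achievable is therefore at most the maximum over $\varepsilon' \in [0, \text{something}]$ of $|OPT|$ divided by the better of the two bounds; this maximum is attained at the crossover point $\varepsilon'^*$ where the two guarantees coincide. I would solve for $\varepsilon'^*$, substitute back, and verify that as $\beta \to 0$ the resulting ratio tends to $\frac{11}{3}$. Setting $\delta$ and then picking $\beta = \beta(\delta)$ sufficiently small yields the claimed $\frac{11}{3}+\delta$ bound.

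Concretely, I would first rewrite the Lemma~\ref{lem:case-1}/\ref{lem:case-2} bound purely in terms of $|OPT|$: we have $\ell$ regions, and to relate $\ell$ to $|OPT|$ from below I would use Lemma~\ref{lemma:sizeofX} in reverse, or more directly note $|OPT| \le X + 2\ell$ together with $X \le \ell \cdot (\text{max intervals per region in }OPT_C)$ — but since $OPT_C$ restricted to $R_i$ can be large, the clean relation is $\ell = |\mathcal{CP}(AB)|$ and $|\mathcal{CP}(BC)| = \frac{|OPT|}{4+2\beta-\varepsilon'}$ with $\ell \le (1+\beta)|\mathcal{CP}(BC)|$, giving $\ell \le \frac{(1+\beta)|OPT|}{4+2\beta-\varepsilon'}$. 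This is an upper bound on $\ell$, which makes the $\frac{7-3\varepsilon'}{6(1+\beta)}\ell$ bound weaker, not stronger — so I must be careful: the correct move is that the solution we output has size at least $\frac{7-3\varepsilon'}{6(1+\beta)}\ell$ and separately we want to show this is at least $\frac{|OPT|}{11/3+\delta}$; since $\ell$ could a priori be small, we need the lower bound $\ell \ge \frac{|\mathcal{CP}(B)|}{1} \ge \dots$ No — the resolution is that we take the maximum of the two available solutions (the plain $\mathcal{CP}_1$ output of size $|\mathcal{CP}(BC)|$ and the associated-runs output of size $\ge \frac{7-3\varepsilon'}{6(1+\beta)}\ell$), and we also have $\ell \ge \frac{|OPT| - X}{2}$ is not useful either. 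The actual argument: from $|\mathcal{CP}(BC)| = \frac{|OPT|}{4+2\beta-\varepsilon'}$ and $\ell \le (1+\beta)|\mathcal{CP}(BC)|$ we cannot lower-bound $\ell$, BUT when $\varepsilon'$ is small, $\ell$ is forced to be close to $(1+\beta)|\mathcal{CP}(BC)|$ — indeed the proof of Lemma~\ref{lemma:sizeofX} shows $\ell \le \frac{(1+\beta)(X+2\ell)}{4+2\beta-\varepsilon'}$, i.e. $X \ge \frac{(2-\varepsilon')}{1+\beta}\ell$, which implicitly uses only the upper bound on $\ell$; so in fact the associated-runs bound $\frac{7-3\varepsilon'}{6(1+\beta)}\ell$ must be compared against $|OPT| \le X+2\ell \le$ (using $X \ge \frac{2-\varepsilon'}{1+\beta}\ell$ is the wrong direction) — I need $|OPT| \le X + 2\ell$ and $X \le$ ???

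The main obstacle, as the above reasoning reveals, is getting the direction of the inequalities right and tying everything back to $|OPT|$ cleanly: the natural quantities ($\ell$, $X$, $|\mathcal{CP}(BC)|$) are related to $|OPT|$ by a mix of upper and lower bounds, and one must verify that taking the maximum of the plain output and the associated-runs output genuinely yields a $\frac{|OPT|}{11/3}$-size solution for every value of $\varepsilon' \in [0, \varepsilon'_{\max}]$. I expect the clean way is: let $s$ be the size of the output. Then $s \ge |\mathcal{CP}(BC)| = \frac{|OPT|}{4+2\beta-\varepsilon'}$ (from $\mathcal{CP}_1$ alone) and $s \ge \frac{7-3\varepsilon'}{6(1+\beta)}\ell$. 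For the second to be useful I need $\ell$ large; and indeed $\ell = |\mathcal{CP}(AB)| \ge \frac{|OPT(AB)|}{2}$ is not quite it, but combining the chain $|OPT| \le 2|\mathcal{CP}(A)| + 2|\mathcal{CP}(BC)| \le 2(1+\beta)|\mathcal{CP}(B)| + 2|\mathcal{CP}(BC)|$ with $|\mathcal{CP}(B)| \le |\mathcal{CP}(BC)|$ and $\ell = |\mathcal{CP}(AB)| \ge |\mathcal{CP}(B)|$ gives $\ell \ge |\mathcal{CP}(B)| \ge \frac{|\mathcal{CP}(AB)|}{1+\beta}$ which is circular; the genuine lower bound on $\ell$ in terms of $|OPT|$ comes precisely from $\varepsilon'$ being small, via Equation~\eqref{eqn:123}: $\ell = \frac{|OPT|}{4+2\beta-\varepsilon'} \cdot \frac{|\mathcal{CP}(AB)|}{|\mathcal{CP}(BC)|}$ and by Property~\textbf{S2} plus monotonicity $\frac{|\mathcal{CP}(AB)|}{|\mathcal{CP}(BC)|} \ge \frac{1}{1+\beta} \cdot \frac{|\mathcal{CP}(B)|}{|\mathcal{CP}(B)|}$... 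Ultimately I would parametrize $\ell = c \cdot \frac{|OPT|}{4+2\beta-\varepsilon'}$ with $c \in [\frac{1}{1+\beta}, 1+\beta]$ — wait, $c$ can be at most $1+\beta$ and we have no lower bound in general, but the whole point of Lemma~\ref{lemma:sizeofX} is that small $\varepsilon'$ forces the associated runs to be productive regardless. So the clean statement to prove is: for all valid configurations, $\max\!\left\{\frac{|OPT|}{4+2\beta-\varepsilon'}, \frac{7-3\varepsilon'}{6(1+\beta)}\ell\right\} \ge \frac{|OPT|}{11/3+\delta}$, and I would verify this by checking the worst case is $\ell = \frac{(1+\beta)|OPT|}{4+2\beta-\varepsilon'}$ (equality in Property~\textbf{S2}), substituting to get $\max\!\left\{\frac{1}{4+2\beta-\varepsilon'}, \frac{(7-3\varepsilon')}{6(4+2\beta-\varepsilon')}\right\}|OPT|$, optimizing over $\varepsilon' \in [0,1]$: the first term decreases in $\varepsilon'$, the second increases (numerator $7-3\varepsilon'$ decreases but so does $4+2\beta-\varepsilon'$... one checks $\frac{7-3\varepsilon'}{4+2\beta-\varepsilon'}$ is decreasing for $\beta<3/2$, so the max is minimized where the two are equal, at $7-3\varepsilon' = 6$, i.e. $\varepsilon' = 1/3$), giving ratio $4+2\beta-\frac13 = \frac{11}{3}+2\beta$; then set $\beta = \delta/2$. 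I would present this final optimization as the concluding computation of the proof.
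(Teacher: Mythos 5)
Your proposal takes the same route as the paper's proof: output the better of the plain smooth-histogram solution (of size $|\mathcal{CP}(BC)| = \frac{|OPT|}{4+2\beta-\varepsilon'}$) and the associated-runs solution (of size at least $\frac{7-3\varepsilon'}{6(1+\beta)}\ell$ by Lemmas~\ref{lem:case-1} and~\ref{lem:case-2}), equate the two guarantees to locate the crossover $\varepsilon'=\frac{1}{3}$, read off the factor $\frac{11}{3}+2\beta$, and set $\beta=\delta/2$. That is exactly the paper's concluding computation.

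However, the step you repeatedly flag as ``the main obstacle'' is never actually closed, and it is a genuine gap. When $\varepsilon'<\frac{1}{3}$ you must convert the bound $\frac{7-3\varepsilon'}{6(1+\beta)}\ell$ into a bound against $|OPT|$, yet every relation you list ($\ell\le(1+\beta)|\mathcal{CP}(BC)|$, etc.) only \emph{upper}-bounds $\ell$, which weakens that guarantee. Your resolution --- ``the worst case is equality in Property~\textbf{S2}'' --- is an assertion in the wrong direction: for fixed $|OPT|$ and $\varepsilon'$, shrinking $\ell$ only hurts the associated-runs bound as you have packaged it, so a priori the adversary prefers $\ell$ small. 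What rescues the argument is that the Lemma~\ref{lem:case-1}/\ref{lem:case-2} solution is really of size at least $\frac{X}{2}+\frac{\ell}{6(1+\beta)}$, and one must re-invoke $|OPT|\le X+2\ell$ from the proof of Lemma~\ref{lemma:sizeofX}: a small $\ell$ forces $X$, and hence the associated-runs solution, to be large. Concretely, $\frac{5}{11}|\mathcal{CP}(BC)|+\frac{6}{11}\bigl(\frac{X}{2}+\frac{\ell}{6(1+\beta)}\bigr)\ge\frac{3}{11}\bigl(X+\frac{2\ell}{1+\beta}\bigr)\ge\frac{3|OPT|}{11(1+\beta)}$, so the larger of the two outputs is at least $\frac{|OPT|}{\frac{11}{3}(1+\beta)}$; equivalently, the ratio $\bigl(\frac{X}{2}+\frac{\ell}{6(1+\beta)}\bigr)/(X+2\ell)$ is increasing in $X$, so the worst case has $X$ at its Lemma~\ref{lemma:sizeofX} lower bound, which is what legitimizes your formula $\frac{7-3\varepsilon'}{6(4+2\beta-\varepsilon')}$. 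Finally, you do not address the space claim at all: the theorem also asserts $\OrderT(|OPT|)$ space, and one must argue that the associated runs add only $O(|OPT|)$ on top of the smooth histogram (the single-region runs are pairwise disjoint, and each interval belongs to at most two merged-region runs).
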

\begin{proof}

The naive smooth histogram method gives us a solution of size  
$$|\mathcal{CP}(BC)| \ge |\mathcal{CP}(B)| \ge \frac{|\mathcal{CP}(AB)|}{1+\beta} \ge \frac{\ell}{1+\beta} \ , $$ 
where we used the monotonicity of $\mathcal{CP}$ (Lemma~\ref{lem:monotonic}) and Property \textbf{S2}.
Using the associated runs, by Lemmas~\ref{lem:case-1} and \ref{lem:case-2}, we get a solution of size at least 
$$ \frac{7 - 3\varepsilon'}{6(1+\beta)}  \ell \ . $$
Since we can output the larger of the two solutions, in the worst case both solutions have the same value, i.e., when:    
$$\frac{\ell}{1+\beta}  =\frac{7  - 3\varepsilon'}{6(1+\beta)}  \ell  \ ,$$  
which implies $\varepsilon' = \frac{1}{3}$. 
The approximation factor thus is for any $\delta > 0$:
$$4+2 \cdot \beta - \varepsilon' + \delta = 11/3 + 2\cdot \beta + \delta \ .$$
Choosing $\beta = \frac{1}{2} \delta$, and rescaling $\delta$ to $\frac{1}{2} \delta$ gives the result.

  As a consequence of Property~\textbf{S1}, as previously established,
  the smooth histogram algorithm uses $\OrderT(|OPT|)$ space. It remains to argue that the runs created in Steps 1 and 2 of Algorithm \ref{alg:arbitrary-length} only increase the space requirements by a constant times $|OPT|$. 
  
  Indeed, for a fixed instance $\mathcal{CP}_i$, all the runs created by Step 1 are pairwise disjoint (they do not store common intervals) so their cumulative space is $O(|OPT|)$ as we assumed the memory required to store an interval is $O(1)$. 
  Similarly, for the runs created by Step 2, an interval appears in at most two such runs.
  So, the cumulative space is  again $O(|OPT|)$.
  Therefore, the total number of intervals stored in the associated runs is at most $O(|OPT|)$, completing the proof.
\end{proof}

The proof of the approximation factor of the algorithm is shown to be tight in Appendix \ref{app:hard-instance}, meaning that the algorithm does not beat the approximation factor of $\frac{11}{3}$.

\subsection{Space Lower Bound}\label{sec:lb-arb-length}
We now give our space lower bound for sliding window algorithms for \textsf{Interval Selection} on arbitrary-length intervals.  Our result is established by a reduction to the three-party communication problem $\textsf{Chain}_3$.

\begin{theorem}\label{thm:lb-arbitrary-length}
Let $\varepsilon > 0$ be any small constant. Then, any algorithm in the sliding window model that computes a $(2.5 - \varepsilon)$-approximate solution to \textsf{Interval Selection} on arbitrary-length intervals requires a memory of size $\Omega(L)$.  
\end{theorem}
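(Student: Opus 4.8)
The plan is to reduce from $\textsf{Chain}_3(n)$ with $n = \Theta(L)$, so that a $(2.5-\varepsilon)$-approximation sliding window algorithm $\mathcal{A}$ with space $s$ yields a one-way protocol for $\textsf{Chain}_3$ whose largest message has size $O(s)$; Theorem~\ref{thm:chain-k} then forces $s = \Omega(n/3) = \Omega(L)$. Recall that in $\textsf{Chain}_3$, $P_1$ holds $X_1 \in \{0,1\}^n$, $P_2$ holds $X_2 \in \{0,1\}^n$ and $J_1 \in [n]$, $P_3$ holds $J_2 \in [n]$, with $X_1[J_1] = X_2[J_2] = x$, and $P_3$ must output $x$. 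The three parties simulate $\mathcal{A}$ in order: $P_1$ feeds a first block of intervals and passes the memory state, $P_2$ appends a second block and passes the state, $P_3$ appends a final block of ``crucial'' intervals and reads off $\mathcal{A}$'s answer on a suitable final sliding window.

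The construction uses \emph{two} clique gadgets, one built by $P_1$ from $X_1$ and one by $P_2$ from $X_2$, arranged so that they are ``nested'' with respect to a pair of long crucial intervals that $P_3$ will add. Concretely: $P_1$ places a stack of $n$ short, mutually overlapping slots $s^{(1)}_1, \dots, s^{(1)}_n$ (slightly shifted left-to-right, so any two are still pairwise intersecting and form a clique), with slot $s^{(1)}_i$ present iff $X_1[i]=1$; $P_2$ places an analogous stack $s^{(2)}_1, \dots, s^{(2)}_n$ further along the line, present iff $X_2[i]=1$. Crucially the two stacks are placed so that a single long interval can be made to (i) avoid exactly the slots $s^{(1)}_{\ge J_1+1}$ / $s^{(1)}_{\le J_1}$ in the first stack (as in the unit-length construction), and (ii) simultaneously the geometry is compatible with a second long interval targeting slot $J_2$ in the second stack. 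Then $P_3$, knowing $J_2$, and $P_2$, knowing $J_1$, each insert one long ``pointer'' interval located just to the right of slot $J_1$ (resp.\ $J_2$): the first pointer is disjoint from $s^{(1)}_{J_1}$ iff $X_1[J_1]=1$ and overlaps it otherwise, and likewise for the second; $P_3$ also inserts enough filler intervals to slide the window so that exactly the slots with index $< J_1$ in stack $1$ and $< J_2$ in stack $2$ have expired. The window $\mathcal{S}$ that $P_3$ evaluates then contains: the (at most one) surviving slot $s^{(1)}_{J_1}$ — present iff $X_1[J_1]=1$ — plus its pointer, the surviving slot $s^{(2)}_{J_2}$ — present iff $X_2[J_2]=1$ — plus its pointer, and possibly one more global independent interval. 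The point of chaining two gadgets is that when $x=0$ both crucial slots are \emph{simultaneously} absent, so $|OPT(\mathcal{S})|$ drops by two relative to the $x=1$ case; one designs the geometry so that $|OPT(\mathcal{S})| = 5$ when $x=1$ but $|OPT(\mathcal{S})| = 2$ when $x=0$ (or some pair of values with ratio $\ge 2.5$), and a $(2.5-\varepsilon)$-approximation distinguishes these.

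The steps I would carry out, in order: (1) fix the exact interval coordinates for both stacks and for the two pointer intervals, verifying that within each stack all present slots are pairwise overlapping (a clique) and that each pointer is disjoint from exactly the intended half of its stack; (2) verify that $P_2$ can insert its pointer and the block-2 slots using only $J_1$ and $X_2$ (it does not know $X_1$ — this is fine, since $P_2$'s pointer targets $J_1$ in $P_1$'s stack and its own stack does not depend on $X_1$), and that $P_3$ can insert its pointer and fillers using only $J_2$; (3) carefully count $|OPT(\mathcal{S})|$ in the final window in both cases $x=0$ and $x=1$, making sure the intended independent set of size $5$ genuinely exists when $x=1$ (pointer$_1$ + slot$_{J_1}$ is blocked — no: one picks slot$_{J_1}$ \emph{or} pointer$_1$, whichever chains better; I will need one extra independent ``spacer'' interval per gadget to reach size $5$, so the real target values are likely $|OPT|=5$ vs.\ $|OPT|=2$ or $|OPT|=\lceil 2.5 k\rceil$ vs.\ $k$ for a scaled-up construction using $k$ copies), and that no larger independent set exists when $x=0$; (4) conclude that $\mathcal{A}$'s approximate answer determines $x$ with probability $\ge 2/3$, that the total communication is $O(s + \log L)$ bits, and invoke Theorem~\ref{thm:chain-k} with $k=3$.

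The main obstacle I expect is step (3): designing the geometry so that the gap in $|OPT(\mathcal{S})|$ between $x=0$ and $x=1$ is a clean factor of $2.5$ rather than $2$. A naive single-gadget reduction (as in the unit-length case) only gives a factor-$2$ gap; the whole benefit of $\textsf{Chain}_3$ is that two crucial intervals vanish together when $x=0$, but converting ``two intervals vanish'' into ``the ratio is $2.5$'' requires the two gadgets to interact — e.g.\ the spacer/pointer intervals of gadget $2$ must be forced to overlap something in gadget $1$ unless $x=1$, so that the $x=0$ optimum cannot simply take one interval from each region independently. Getting this interaction right while keeping everything insertable in one-way order (and keeping the expiry pattern controllable purely by the number of filler intervals $P_3$ adds) is the delicate part; I would likely introduce a small number $k$ of repeated gadget-pairs and argue $|OPT| = (2k{+}1)$ vs.\ $(k{+}1)$-type bounds, taking $k$ large to push the ratio to $2.5 - \varepsilon$, rather than trying to hit $2.5$ exactly with a constant-size instance.
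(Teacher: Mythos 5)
Your high-level strategy is the same as the paper's: reduce from $\textsf{Chain}_3(n)$ with $n=\Theta(L)$, encode each of the two correlated \textsf{Index} instances as a clique gadget, engineer a final window with $|OPT|=5$ when $x=1$ versus $|OPT|=2$ when $x=0$, and invoke Theorem~\ref{thm:chain-k}. However, the two points you flag as ``delicate'' are exactly where your plan, as stated, does not go through, and the paper resolves them with a specific geometric device you do not have.

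First, the selection mechanism for the second gadget cannot be expiry. Expiry removes a \emph{prefix of the whole stream}, so you cannot expire the slots with index $<J_2$ in $P_2$'s stack without first expiring everything $P_1$ inserted, including the crucial slot $J_1$ of the first gadget; moreover $P_3$ does not know $J_1$ and so cannot even compute the right number of fillers. In the paper only the \emph{first} \textsf{Index} instance is selected by expiry (Bob, who knows $J_1$, adds $2(J_1-1)$ fillers, and Alice's intervals are interleaved so that this prefix is exactly the unwanted slots). The second instance is selected \emph{geometrically}: Bob's gadget $I_3(1),\dots,I_3(n)$ is a nested stack sitting in the gap left by $I_1(J_1)$ and $I_2(J_1)$, and Charlie inserts two tiny intervals $I_{J_2}, I'_{J_2}$ flanking $I_3(J_2)$, one intersecting every $I_3(i)$ with $i<J_2$ and the other every $I_3(i)$ with $i>J_2$. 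Second, the arithmetic: with one surviving slot plus one pointer per gadget you get $4$ vs.\ $2$, i.e., only a factor $2$, and your fallback of $k$ repeated gadget pairs with $(2k{+}1)$ vs.\ $(k{+}1)$ has ratio tending to $2$, not $2.5$, so it cannot rescue the bound. The paper reaches $5$ vs.\ $2$ by having Alice build \emph{two mirrored} clique gadgets (stacks $I_1$ and $I_2$) keyed to the same index $J_1$, with Bob's stack nested between them; the $x=1$ optimum is then $\{I_1(J_1),I_2(J_1),I_3(J_2),I_{J_2},I'_{J_2}\}$, and when $x=0$ all three crucial intervals become long blockers so the optimum collapses to $2$. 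Without these two ingredients your reduction does not yield a gap exceeding $2$, so the proposal as written has a genuine gap.
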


\begin{proof}
Let $\mathcal{A}$ be a sliding window algorithm with approximation factor $2.5- \varepsilon$, for some $\varepsilon > 0$, as in the statement of the theorem, and let $n = \frac{L-2}{3}$, where $L$ is the window length. We will argue how $\textsf{Chain}_3(n)$ can be solved with the help of $\mathcal{A}$.

To this end, denote the three parties in the communication problem $\textsf{Chain}_3(n)$ by Alice, Bob, and Charlie. Let $X_1 \in \{0,1\}^n$ be Alice's input, let $X_2 \in \{0,1\}^n$ and $ J_1 \in [n]$ be Bob's input, and let $J_2 \in [n]$ be Charlie's input. The players proceed as follows:

\begin{itemize}
\item \textbf{Alice:} For every $i \in [n]$, Alice feeds the following  intervals into $\mathcal{A}$: 

\noindent \begin{minipage}{0.45\textwidth}
\begin{align*}
    I_1(i) = \begin{cases}
        \left[\frac{i}{3n}, 1 + \frac{i}{3n}\right], & \text{if } X_1[i] = 1 \ , \\
        [-10 - i ,10 + i], & \text{if } X_1[i] = 0 \ . 
    \end{cases}
\end{align*}
\end{minipage} \hfill 
\begin{minipage}{0.45\textwidth}
\begin{align*}
    I_2(i) = \begin{cases}
        \left[2 - \frac{i}{3n}, 3 - \frac{i}{3n}\right], & \text{if } X_2[i] = 1 \ , \\
        [-11 - i ,11 + i], & \text{if } X_2[i] = 0 \ . 
    \end{cases}
\end{align*}
\end{minipage}

\vspace{0.3cm}

The order given is $I_1(1), I_2(1), I_1(2), I_2(2), \dots, I_1(n), I_2(n)$.
We observe that, for every $i \in [n]$,  when $X_1[i] = 1$, the intervals $I_1(i)$ and $I_2(i)$ are disjoint.
Alice sends the memory state of $\mathcal{A}$ to Bob.
 
\vspace{0.3cm}
\item \textbf{Bob:} For every $i \in [n + 2(J_1 - 1)]$, Bob feeds the following interval into $\mathcal{A}$:

\begin{align*}
    I_3(i) = \begin{cases}
        \left[1 + \frac{J_1}{3n} + \frac{1}{6n} + \frac{i-1}{6n^2}, 2 - \frac{J_1}{3n} - \frac{1}{6n} + \frac{i-1}{6n^2}\right], & \text{if } i \le n \text{ and } X[i] = 1 \ , \\
        [-10 - i ,11 + i], & \text{otherwise} \ . 
    \end{cases}
\end{align*}

Let $k \in [n]$. Notice that, for every $j \in [n+2(J_1 - 1)]$, when $X_1[k] = X_2[j] = 1$, we have that $I_3(j)$ is disjoint with both $I_1(k)$ and $I_2(k)$  if and only if $j \le J_1$.
Otherwise, $I_3(j)$ intersects with both $I_1(k)$ and $I_2(k)$.

Bob sends the memory state of $\mathcal{A}$ and $J_2$ to Charlie.

\vspace{0.3cm}
 \item \textbf{Charlie:}
 We denote the interval boundaries of $I_3(i)$ by $a_{I_3(i)}$ and $b_{I_3(i)}$, i.e., $I_3(i) = [a_{I_3(i)}, b_{I_3(i)}]$. Charlie feeds the following two intervals into $\mathcal{A}$: 
 \begin{align*}
 I_{J_2} & = \left[\frac{2a_{I_3(J_2 - 1)} + a_{I_3(J_2)}}{3},\frac{a_{I_3(J_2-1)} + 2a_{I_3(J_2)}}{3}\right] \ , \mbox{ and} \\
 I'_{J_2} & = \left[\frac{2b_{I_3(J_2-1)} + b_{I_3(J_2)}}{3},\frac{b_{I_3(J_2-1)} + 2b_{I_3(J_2)}}{3}\right] \ . 
 \end{align*}
 
 Notice that $I_{J_2}$ intersects all intervals of $I_3(i)$, for all $i < J_2$, while  $I'_{J_2}$ intersects all intervals of $I_3(i)$, for all $i > J_2$.
 
 Using $\mathcal{A}$, Charlie computes the largest independent set $OPT$ of $$\mathcal{I} = \{I_1(k) | J_1 \le k \le n\} \cup \{I_2(k) | J_1 \le k \le n\} \cup \{I_3(k) |  1 \le k \le n + 2 (J_1-1)\} \cup \{I_{J_2}, I'_{J_2}\} \ ,$$
which is possible since $\mathcal{A}$ is a sliding window algorithm and thus able to solve the situation when the intervals $\cup_{1 \le k < J_1} \left(I_1(k) \cup I_2(k)\right)$ have expired.
\end{itemize}

Figure~\ref{fig:variablelowerbound}  provides an illustration of the proof of Theorem~\ref{thm:lb-arbitrary-length}.
\begin{figure}[ht]
\centering
\begin{tikzpicture}[scale=1.3]

\draw (0,0)[red] node[anchor=east] {$I_1(1)$}  -- (2,0);
\draw[thick,dotted,red] (1,0) -- (1.7,0.5);
\draw (0.8,0.5)[red] node[anchor=east] {$I_1(J_1-1)$}  -- (2.8,0.5);
\draw (1.4,0.85)  node[anchor=east] {$I_1(J_1)$} -- (3.4,0.85);
\draw (2.1,1.2)  node[anchor=east] {$I_1(J_1+1)$} -- (4.1,1.2);
\draw[dashed] (3.4,0) -- (3.4,5.3);
\draw[thick, dotted] (3.1,1.2) -- (3.7,1.6);
\draw[dashed] (4.1,0) -- (4.1,5.3);

\draw (2.8,1.6) node[anchor=east] {$I_1(n)$} -- (4.8,1.6);

\draw (5.4,1.6)  -- (7.4,1.6)  node[anchor=west] {$I_2(n)$};
\draw[thick, dotted] (7.1,1.2) -- (6.5,1.6);
\draw (6.1,1.2)   -- (8.1,1.2) node[anchor=west] {$I_1(J_1+1)$};
\draw[dashed] (6.1,0) -- (6.1,5.3);
\draw (6.8,0.85)  -- (8.8,0.85) node[anchor=west] {$I_2(J_1)$};
\draw[dashed] (6.8,0) -- (6.8,5.3);
\draw (7.2,0.5)[red]  -- (9.2,0.5) node[anchor=west] {$I_2(J_1-1)$};
\draw (8,0)[red]  -- (10,0) node[anchor=west] {$I_2(1)$};
\draw[thick,dotted,red] (9,0) -- (8.3,0.5);

\draw  (3.4,2.5)  -- (6.2,2.5);
\draw (4.8,2.5) node[anchor=south] {$I_3(1)$};

\draw[thick,dotted] (4.8,2.5) -- (4.85,3.3);
\draw  (3.6,3.3)  -- (6.3,3.3);
\draw (4.75,3.3) node[anchor=south] {$I_3(J_2-1)$};

\draw  (3.8,3.8)  -- (6.4,3.8);
\draw (5.1,3.8) node[anchor=south] {$I_3(J_2)$};
\draw (3.55,3.8) -- (3.65,3.8);
\draw  (3.6,3.8) node[anchor=south] {$I_{J_2}$};

\draw (6.5,3.8) -- (6.6,3.8);
\draw (6.55,3.8) node[anchor=south] {$I'_{J_2}$};

\draw  (3.9,4.4) -- (6.7,4.4);
\draw (5.25,4.4) node[anchor=south] {$I_3(J_2+1)$};

\draw[thick,dotted] (5.2,4.4) -- (5.35,5.2); 
\draw  (4,5.2) -- (6.8,5.2);
\draw (5.4,5.2) node[anchor=south] {$I_3(n)$};

\end{tikzpicture}
\caption{This figure illustrates the intervals created by Alice, Bob and Charlie in the proof of Theorem \ref{thm:lb-arbitrary-length} for an instance of $\textsf{Chain}_3(n)$ where $n = \frac{L-2}{3}$ with  $X_1[J_1] = X_2[J_2] = 1$.  The red intervals in the figure ($I_1(1),I_1(J_1 - 1)$,$I_2(1)$,$I_2(J_1 - 1)$) correspond to expired intervals. The optimal solution is $\{I_1(J_1), I_2(J_1), I_{J_2}, I'_{J_2}, I_3(J_2)\}$ of size 5 . Otherwise,  if $X_1[J_1] = X_2[J_2] = 0$, then the optimal solution would have been of size 2. All intervals $I_3(i)$ are disjoint from $I_1(J_1)$ and $I_2(J_2)$. However, they intersect with $I_1(J_1 + 1)$ and $I_2(J_2 + 1)$ as emphasized by the vertical dashed lines. Intervals $I_3(i)$ for $n+2(J_1 - 1) \ge i > n$ have been omitted as they do not impact the optimal solution and their only role is to advance the sliding window.}.  
\label{fig:variablelowerbound}
\end{figure}

 The total number of intervals added by the three players is  $3n +2 + 2(J_1-1) = L + 2 (J_1 - 1)$. So, after Charlie's execution $\mathcal{A}$, the incumbent region indeed consists of $\mathcal{I}$.
 
We will argue now that if $X_1[J_1] = X_2[J_2] = 1$ then the optimal solution size is $5$, while if $X_1[J_1] = X_2[J_2] = 0$ then the optimal solution size is $2$.

 Suppose thus that $X_1[J_1] = X_2[J_2] = 1$. Then it is not hard to see that the unique optimal solution is $\{I_1(J_1),I_2(J_1),I_3(J_2),I_{J_2},I'_{J_2}\}$ of size $5$.
 
Next, suppose that $X_1[J_1] = X_2[J_2] = 0$. Notice first that, in this case, $I_1(J_1),I_2(J_1),I_3(J_2)$  intersect with every other interval in the input, so they can only belong to independent sets of size at most $1$.

Also, we have that any interval $I_1(i)$ with $i > J_1$ would block all the intervals $I_3(j)$ for $1 \le j \le n + 2(J_1 - 1)$ and $I_{J_2}$.
So, an interval from $I_1(i)$ with $i > J_1$ can be included in an optimal set of size at most $2$ (either $\{I_1(i), I'_{J_2}\}$ or $\{I_1(i),I_2(j)\}$ for some $j > J_1$).
Similarly, $I_2(i)$ with $i > J_2$ can be included in an optimal set of size at most $2$. 
Furthermore, we can construct from Bob and Charlie's input a solution of size at most $2$ (similar to the $\frac{3}{2} - \varepsilon$ lower bound construction of \cite{ehr16}). The size of an optimal solution is thus in this case $2$.

Recall that $\mathcal{A}$ has an approximation factor of $2.5 - \varepsilon$. Hence, if $X_1[J_1] = X_2[J_2] = 1$ then $\mathcal{A}$ reports a  solution of size at least $3$, thereby distinguishing it from the case when $X_1[J_1] = X_2[J_2] = 0$, which yields an optimal size of $2$.

 Since, by Theorem \ref{thm:chain-k}, $\textsf{Chain}_3(n)$ requires a message of size $\Omega(n)$, and since the protocol solely consists of forwarding the memory state of $\mathcal{A}$, we conclude that $\mathcal{A}$ requires a memory of size $\Omega(n) = \Omega(L)$, which completes the proof.
\end{proof}

\section{Conclusion}\label{sec:conclusion}
In this paper, we initiated the study of the \textsf{Interval Selection} problem in the sliding window model of computation. We gave algorithms and lower bounds for both unit-length and arbitrary-length intervals. In the unit-length case, we gave a $2$-approximation algorithm that uses space $O(|OPT|)$, and we showed that this is best possible in that any $(2-\varepsilon)$-approximation algorithm requires space $\Omega(L)$. In the arbitrary-length case, we gave a $(\frac{11}{3}+\varepsilon)$-approximation algorithm that uses space $\OrderT(|OPT|)$, and we showed that any $(\frac{5}{2}-\varepsilon)$-approximation algorithm requires space $\Omega(L)$. Contrasted with results known from the one-pass streaming setting, our result implies that  \textsf{Interval Selection} in both the unit-length and the arbitrary-length cases is harder to solve in the sliding window setting than in the one-pass streaming setting.

We conclude with two open questions. 

First, the approximation guarantees of our algorithm for arbitrary-length intervals and our respective lower bound do not match. Can we close this gap?

Second, the sliding window model has received significantly less attention for the study of graph problems than the traditional one-pass streaming setting. While from a theoretical perspective, the sliding window model is less clean than the one-pass streaming model, as discussed in the introduction, it is, however, the more suitable model for many applications. We are particularly interested in understanding the differences between the two models. For example, which graph problems can be solved equally well in the sliding window model as in the one-pass streaming setting, and which problems are significantly harder to solve?
    
\section*{Acknowledgement}

We are grateful to Sang-Sub Kim for identifying that the clean-up routine in the smooth histogram method used in an earlier version of this paper did not, as claimed, establish properties \textbf{S1} and \textbf{S2}. This issue has been addressed and corrected in the current version.

\bibliography{ak24}

\newpage 
\appendix

\section{Hard Instance for the Analysis of Algorithm \ref{alg:arbitrary-length}} \label{app:hard-instance}

\subsection{Description of the Instance}

We will present a hard instance demonstrating that the analysis of Algorithm \ref{alg:arbitrary-length} is tight. 
We assume that the smooth histogram parameter $\beta$ is set to  $\beta = 0$.
This is a reasonable assumption since the approximation factor of the algorithm approaches the optimal value of $\frac{11}{3}$ when $\beta \rightarrow 0$.

As before, let $\mathcal{CP}_1$ be the oldest still active instance created by the smooth histogram algorithm, and let $\mathcal{CP}_0$ be the expired instance which came before $\mathcal{CP}_1$.
Given  $\mathcal{CP}_0$ and $\mathcal{CP}_1$, we can divide the active stream into successive parts $A,B,C$. 
$A$ represents the intervals that arrived  before the starting position of $\mathcal{CP}_1$.
$C$ represents the intervals that arrived right after the runs $\mathcal{CP}_0$ and $\mathcal{CP}_1$ became adjacent (i.e after all the instances between  $\mathcal{CP}_0$ and $\mathcal{CP}_1$ are deleted).
The intervals arriving after $A$ but before $C$ are denoted as $B$.

The smooth histogram condition then translates to $\mathcal{CP}(AB) = \mathcal{CP}(A)$.

Let $\ell$ be a positive integer divisible by 3.
We will first give the full stream in Algorithm \ref{alg:hardstream} and then explain the purpose of each portion of the stream.

\begin{algorithm}[ht]
 \caption{Hard instance stream $S = ABC$ for Algorithm \ref{alg:arbitrary-length}}
 \label{alg:hardstream}
 \vspace{2mm}
 \textbf{Stream A \newline \newline}
 \textbf{Stream $A_1$} 
 \begin{algorithmic}[1]
 \For {$x$ from $1$ to $\ell$}
  \State Insert $[x + 0.1,x+1]$
  \EndFor

 \end{algorithmic}
 \textbf{Stream $A_2$}
 \begin{algorithmic}[1]
  
  \For {$x$ from $1$ to $\ell$}
  \State Insert $[x + 0.5,x+0.54]$ 
 \EndFor    
 \end{algorithmic}

  \textbf{Stream $A_3$}
 \begin{algorithmic}[1]
 \For {$x$ from $1$ to $\ell$}
  \State Insert $[x + 0.95,x+1.05]$ 
 \EndFor 
 \end{algorithmic}
 \vspace{2mm}
  \hrule
\vspace{2mm}

\textbf{Stream B}
\begin{algorithmic}[1]

\For {$x$ from $1$ to $\ell$}
  \If {$x = 3k + 2$ for integer $k$}
   \State Insert $[x-0.1, x+0.26]$
   \State Insert $[x+0.53, x+0.71]$
   \State Insert $[x+0.9, x+1.1]$
  \EndIf
\EndFor

\end{algorithmic}

\vspace{2mm} 
\hrule
\vspace{2mm}
 \textbf{Stream C \newline \newline}
 \textbf{Stream} $C_1$:
\begin{algorithmic}[1]
\For{$x$ from $1$ to $\ell$}
\If{$x = 3k + 2$ for integer $k$}
\State Insert $[x + 0.06, x + 0.3]$
\State Insert $[x + 0.35, x + 0.75]$
\EndIf
\EndFor
\end{algorithmic}
\textbf{Stream} $C_2$:
\begin{algorithmic}[1]
\For{$x$ from $1$ to $\ell$}
\If{$x = 3k + 2$ for integer $k$}
\State Insert $[x + 0.06, x + 0.15]$
\State Insert $[x + 0.25, x + 0.35]$
\State Insert $[x + 0.55, x + 0.6]$
\State Insert $[x + 0.7, x + 0.8]$
\State Insert $[x + 0.9, x + 0.94]$
\EndIf
\EndFor
\end{algorithmic}
\vspace{2mm}
\end{algorithm}

We call the created regions $[x,x+1)$  with $x = 3k + 2$, for some integer $k$, {\em good}.
Notice that the number of regions created by $\mathcal{CP}(A)$ is $\ell$ while the number of good regions is exactly $\ell/3$ because $\ell$ is divisible by 3.

We will now discuss the instance created in Algorithm~\ref{alg:hardstream}, which is also sketched in Figure \ref{fig:hardisntancesketch}.

\begin{figure}
\centering
\begin{tikzpicture}[scale=1.5]

\draw[very thick] (0.5,2) -- (0.5,-2);
\draw[very thick] (5.5,2) -- (5.5,-2);
\draw[very thick,green] (1.8,2) -- (1.8,-2);
\draw[very thick,green] (3.5,2) -- (3.5,-2);

\draw (-0.5,1) node[anchor=east] {$A_2$};
\draw (-0.5,0.5) node[anchor=east] {$A_3$};
\draw (-0.5,0) node[anchor=east] {$B$};
\draw (-0.5,-0.5) node[anchor=east] {$C_1$};
\draw (-0.5,-1) node[anchor=east] {$C_2$};
\draw (0.4,0.5) -- (0.6,0.5);

\draw[blue] (0.2,0) -- (1.75,0);
\draw[green] (0.75,-0.5) -- (1.8,-0.5);
\draw[green] (0.75,-1) -- (1.05,-1);
\draw[green] (1.65,-1) -- (1.95,-1);
\draw[green] (2.05,-0.5) -- (3.5,-0.5);
\draw (2.4,1) -- (2.7,1);
\draw[blue] (2.6,0) -- (3.45,0);
\draw[green] (2.9,-1) -- (3.1,-1);
\draw[green] (3.4,-1) -- (3.6,-1); 
\draw[green] (5,-1) -- (5.2,-1);
\draw (5.4,0.5) -- (5.6,0.5);
\draw[blue] (5,0) -- (5.8,0);
\end{tikzpicture}
\caption{Illustration of a good region, where the vertical black lines depict its boundaries. Black intervals represent streams $A_2,A_3$. Blue intervals represent the stream $B$. The upper green intervals belong to $C_1$. The bottom green intervals belong to $C_2$. Stream $A_1$ is omitted to keep the illustration simple -- these intervals are responsible for creating the regions of $\mathcal{CP}(A)$).}
\label{fig:hardisntancesketch}
\end{figure}

\begin{itemize}

\item \textbf{Stream $A$}

Stream $A$ has three parts in this order: $A_1, A_2, A_3$.
\begin{itemize} 
\item \textbf{Stream $A_1$} 

This stream is responsible for creating the regions $[x,x+1)$ for any integer $x \in [\ell] \setminus \{1,\ell\}$ and regions $(-\infty, 2),[\ell, \infty)$.

\item \textbf{Stream $A_2$}

This stream inserts intervals inside the regions $[x,x+1)$.
Notice that the interval $[x+0.5,x+0.54] \in A_2$ is completely inside the region $[x,x+1)$ and replaces the interval $[x+0.1, x+1] \in A_1$ as both the leftmost and the rightmost intervals of the region.

\item \textbf{Stream $A_3$} 

This stream inserts intervals that intersect the boundaries of regions created by $A_1$.
The stream contributes to the size of the optimal solution of the overall stream $ABC$.
\end{itemize}
\item \textbf{Stream $B$}

If we execute stream $B$ immediately after stream $A$, the regions created in $A$ will remain unchanged.
In $B$, we only add items inside or intersecting the good regions. 
Consider therefore a good region $R = [x,x+1)$.

The intervals $[x-0.1,x+0.26], [x+0.9,x+1.1]$ are intervals crossing the boundary of $R$. They completely include the boundary intervals of $A_3$ (i.e $[x-0.05, x+0.05]$  from the previous region and $[x + 0.95, x + 1.05]$).
The interval $[x+0.53,x+0.71] \in B$ is an interval inside $R$ that intersects the interval $[x + 0.5, x+0.54] \in A_2$.

\item \textbf{Stream $C$}

The stream $C$ is divided into $C_1,C_2$ and only adds intervals completely included within good regions. Consider therefore a good region $R = [x,x+1)$.
\begin{itemize}
  \item \textbf{Stream $C_1$}
  
The purpose of the stream $C_1$ is to create new regions from the regions of the run $\mathcal{CP}(AB)$.
The new regions created inside $R$ are $[x,  x+0.3), [x + 0.3, x + 0.75)$ and $ [x + 0.75, x + 1)$.

  \item \textbf{Stream $C_2$}
  
The stream $C_2$ contributes to $|OPT \cap C|$ of our instance, where $OPT$ is an independent set of optimal size inside the stream $ABC$.
The interval $[x + 0.06, x + 0.15] \in C_2$ intersects the interval $[x-0.1, x+0.26] \in B$, but it does not intersect the interval $[x - 0.05, x + 0.05] \in A_3$.
The interval $[x + 0.9, x + 0.94] \in C_2$ has similar properties.
The intervals $[x + 0.25, x + 0.35] \in C_2$ and $[x + 0.7, x + 0.8] \in C_2$ intersect with the boundaries of the regions of $\mathcal{CP}(C_1)$, so they will not be saved by the algorithm.
Lastly, the interval $[x + 0.55, x+0.6] \in C_2$ does not intersect with $[x + 0.5, x+0.54] \in A_2$, but it intersects with $[x + 0.53, x + 0.71] \in B$.

\end{itemize}
\end{itemize}

\subsection{Analysis of the Instance}
Here we will prove that the output is indeed a $\frac{11}{3}$ approximation of the optimal solution, therefore proving that our analysis of Algorithm \ref{alg:arbitrary-length} is best possible.

\begin{lemma}

The streams $A,B$ yield $\mathcal{CP}(AB) = \mathcal{CP}(B) = \ell$, hence the smooth histogram condition is obeyed.
\end{lemma}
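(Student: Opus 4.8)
The plan is to trace the execution of the $\mathcal{CP}$ algorithm on the concatenated stream $AB$ and track the region partition and the stored leftmost/rightmost intervals in each region. The key claim to establish is that after processing $A$ the algorithm has exactly $\ell$ regions — namely $(-\infty,2)$, $[x,x+1)$ for each integer $x$ with $2 \le x \le \ell-1$, and $[\ell,\infty)$ — and that processing $B$ afterwards does not split any region further, so $|\mathcal{CP}(AB)| = \ell$; then, separately, argue that $|\mathcal{CP}(B)| = \ell$ as well, which together with monotonicity (Lemma~\ref{lem:monotonic}) forces $|\mathcal{CP}(AB)| = |\mathcal{CP}(A)|$, i.e. the smooth histogram stabilization condition $\mathcal{CP}(AB) = \mathcal{CP}(A)$ holds.

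I would proceed in three steps. First, analyze $\mathcal{CP}(A_1)$: the intervals $[x+0.1,x+1]$ for $x=1,\dots,\ell$ arrive in increasing order; the first interval $[1.1,2]$ lies in $\mathbb{R}$ and becomes both leftmost and rightmost; $[2.1,3]$ is disjoint from it (since $2 < 2.1$), so it splits the region at the right endpoint $2$, and inductively each new interval $[x+0.1,x+1]$ is disjoint from the previously stored rightmost interval and splits off a new region at the integer boundary $x$. This yields exactly the $\ell$ regions claimed, with $[x+0.1,x+1]$ stored as both leftmost and rightmost of region $[x,x+1)$ (for the two unbounded end regions, the stored interval is the appropriate boundary one). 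Second, analyze $\mathcal{CP}$ on $A_2$ then $A_3$: each $[x+0.5,x+0.54]$ from $A_2$ lies strictly inside region $[x,x+1)$ and intersects the stored interval $[x+0.1,x+1]$; since its right endpoint $x+0.54 \le x+1 = \text{right}(\text{leftmost})$ and its left endpoint $x+0.5 \ge x+0.1 = \text{left}(\text{rightmost})$, it replaces both leftmost and rightmost without splitting. Each $[x+0.95,x+1.05]$ from $A_3$ crosses a region boundary (the integer $x+1$), so by the first line of the streaming loop it is ignored. Hence the region partition after $A$ is still the $\ell$-region partition, with leftmost = rightmost = $[x+0.5,x+0.54]$ in each bounded region. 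Third, analyze $B$: for each good region $[x,x+1)$ ($x=3k+2$), the interval $[x-0.1,x+0.26]$ crosses the left boundary $x$ and is ignored; $[x+0.9,x+1.1]$ crosses the right boundary $x+1$ and is ignored; and $[x+0.53,x+0.71]$ lies inside $[x,x+1)$ but intersects the stored interval $[x+0.5,x+0.54]$ (since $x+0.53 < x+0.54$), with $x+0.71 > x+0.54$ and $x+0.53 > x+0.5$, so it replaces the rightmost but does not split. Thus no new region is created, and $|\mathcal{CP}(AB)| = \ell$.

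For the equality $|\mathcal{CP}(B)| = \ell$: running $\mathcal{CP}$ on $B$ alone, within each good region the three intervals $[x-0.1,x+0.26]$, $[x+0.53,x+0.71]$, $[x+0.9,x+1.1]$ are pairwise disjoint, so $\mathcal{CP}(B)$ will carve out $3$ regions per good region (two splits) plus the surrounding empty stretches; counting carefully, the total number of regions is $3\cdot(\ell/3)$ from the good regions plus the bounded/unbounded "gap" regions between consecutive triples — I would verify the bookkeeping gives exactly $\ell$. The main obstacle I anticipate is precisely this last count: getting $|\mathcal{CP}(B)| = \ell$ on the nose (not $\ell-1$ or $\ell+1$) requires careful tracking of how the unbounded initial region gets subdivided and how the boundary-crossing intervals in $B$ interact with region edges created by earlier $B$-intervals; an off-by-one here would break the argument. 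Once both $|\mathcal{CP}(AB)| = \ell$ and $|\mathcal{CP}(B)| = \ell$ are established, $\mathcal{CP}(A) \le \mathcal{CP}(AB) = \ell$ by construction and $\mathcal{CP}(A) \ge \mathcal{CP}(A_1) = \ell$ as well (the $\ell$ regions are already present after $A_1$), so $\mathcal{CP}(AB) = \mathcal{CP}(A) = \ell$, which is exactly the smooth histogram condition.
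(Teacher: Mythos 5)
Your proposal follows essentially the same route as the paper's proof: trace the region partition built by $A$, check that no interval of $B$ can split a region (giving $|\mathcal{CP}(AB)|=\ell$), and count $|\mathcal{CP}(B)|$ separately. The one step you explicitly leave open --- that the bookkeeping for $|\mathcal{CP}(B)|$ comes out to exactly $\ell$ --- is where the paper's argument is actually the simplest: $B$ consists of exactly $\ell$ pairwise disjoint intervals ($3$ per good region, $\ell/3$ good regions) arriving in left-to-right order, so under $\mathcal{CP}$ the first one leaves the single region $\mathbb{R}$ intact and every subsequent one lands in a region whose stored interval it misses and splits it, yielding exactly $|B|=\ell$ regions. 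There are no extra ``gap'' regions to add --- the gaps are already absorbed into the $\ell$ half-open regions --- so the off-by-one you worry about does not arise; you should close this rather than defer it.

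One caveat that your trace shares with the paper: when a new interval $I$ arrives to the right of the stored interval of its region, Algorithm~\ref{alg:cabello} splits at $\mathrm{left}(I)$, not at the right endpoint of the old interval. So the boundaries created by $A_1$ sit at $x+0.1$ rather than at the integers $x$, contrary to your claim that $[2.1,3]$ ``splits the region at the right endpoint $2$.'' This matters because the argument that the $A_3$ intervals $[x+0.95,x+1.05]$ (and later the boundary intervals of $B$) are ignored rests on the boundaries lying at the integers; with boundaries at $x+0.1$ the interval $[x+0.95,x+1.05]$ is contained in a single region and would itself cause a split. The paper's own proof simply asserts the regions are $[x,x+1)$ and so glosses over the same point, so this does not distinguish your argument from theirs, but a fully rigorous version of either proof needs to pin down where the splits actually occur.
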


\begin{proof}
Notice that after the run of stream $A$, we have created the regions $[x,x+1)$ for $x \in [\ell]$ and regions $(-\infty, 2), [\ell, \infty)$.

Now, we consider a good region $R = [x, x+1)$.
The saved interval of $A$ inside $R$ is $[x+0.5, x+0.54]$.

When the stream $B$  arrives, the intervals $[x-0.1,x+0.26]$ and $[x+0.9,x+1.1]$ cross the boundaries of $R$.
The interval $[x + 0.53, x+0.71] \in B$ intersects with the interval $[x+0.5,x+0.54] \in A$, so only the rightmost of the region is changed after the stream $B$ is processed.

Since no new regions are created by $B$, we can argue that $\mathcal{CP}(AB) = \ell$ (the number of regions created by $\mathcal{CP}(A)$).
Furthermore, all the intervals of $B$ are pairwise independent so that $\mathcal{CP}(B) = |B| = \ell$, hence proving the required lemma.

\end{proof}

\begin{lemma}\label{lemma:optsize}
Let $OPT$ be an optimal independent set of stream $ABC$. Then, $|OPT| \ge \frac{11\ell}{3}$.
\end{lemma}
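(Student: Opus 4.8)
The plan is to exhibit an explicit independent set of size $\tfrac{11\ell}{3}$ inside the stream $ABC$, which lower-bounds $|OPT|$. I would organize the contribution region by region. Recall there are $\ell$ regions $[x,x+1)$ for $x\in[\ell]$ created by $\mathcal{CP}(A)$, of which $\ell/3$ are \emph{good} (those with $x\equiv 2 \pmod 3$) and $2\ell/3$ are \emph{bad}. The idea is that bad regions each contribute $1$ to the independent set (via a boundary interval of $A_3$, namely $[x+0.95,x+1.05]$, which are pairwise disjoint since consecutive ones are $1$ apart), while each good region contributes $4$ intervals that are internally disjoint and disjoint from the $A_3$-intervals chosen in neighbouring bad regions. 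That yields $2\ell/3 + 4\cdot(\ell/3) = 2\ell/3 + 4\ell/3 = 2\ell$ — which is not enough, so the count must be more careful: in fact I expect each good region to contribute $5$ intervals, giving $2\ell/3 + 5\ell/3 = 7\ell/3$, still short. The resolution is that the $A_3$ boundary intervals live \emph{on} the region boundaries and can be counted once per pair of adjacent regions, so the right bookkeeping is to count, per good region $[x,x+1)$ with $x=3k+2$, the five intervals from $C_2$ sitting inside it plus the boundary intervals shared with its bad neighbours; summing $5$ per good region over $\ell/3$ good regions plus $2$ per good region for the two flanking $A_3$ boundary intervals gives $\tfrac{\ell}{3}\cdot 11 = \tfrac{11\ell}{3}$. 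So the concrete plan is: per good region use the five $C_2$-intervals $[x+0.06,x+0.15],[x+0.25,x+0.35],[x+0.55,x+0.6],[x+0.7,x+0.8],[x+0.9,x+0.94]$, plus the two $A_3$-intervals $[x-0.05,x+0.05]$ and $[x+0.95,x+1.05]$ sitting on its boundaries.

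The key steps, in order, would be: (1) verify the five $C_2$-intervals inside a single good region $[x,x+1)$ are pairwise disjoint — immediate from the numeric endpoints ($0.15<0.25$, $0.35<0.55$, $0.6<0.7$, $0.8<0.9$); (2) verify that the two chosen $A_3$-boundary intervals $[x-0.05,x+0.05]$ and $[x+0.95,x+1.05]$ are disjoint from all five $C_2$-intervals of that region — the leftmost $C_2$-interval starts at $x+0.06>x+0.05$ and the rightmost ends at $x+0.94<x+0.95$, exactly the gaps the construction's commentary on $C_2$ emphasizes; (3) observe that each $A_3$-boundary interval is shared between a good region and an adjacent bad region, and assign it to the good region, so that across all $\ell/3$ good regions we pick up $2\cdot(\ell/3)$ distinct $A_3$-intervals (the boundaries at $x=3k+1$ and $x=3k+2$); (4) check these $A_3$-intervals are pairwise disjoint from each other and from $C_2$-intervals of \emph{other} good regions — true because distinct good regions are at least $3$ apart and all these intervals have length $\le 0.1$; (5) conclude the total is $(5+2)\cdot\tfrac{\ell}{3} = \tfrac{7\ell}{3}$ — and here I realize the count still needs the bad-region contributions: each of the $2\ell/3$ bad regions also contains the $A_2$-interval $[x+0.5,x+0.54]$, disjoint from everything chosen so far, adding $2\ell/3$. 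Wait: $\tfrac{7\ell}{3}+\tfrac{2\ell}{3}=3\ell$, still not $\tfrac{11\ell}{3}$. The honest version: per good region we actually have room for $7$ intervals total (five $C_2$ plus two $A_3$) and per bad region for $2$ (one $A_2$/$A_3$-style interval, but adjacent $A_3$-intervals were already counted), and the arithmetic $7\cdot\tfrac{\ell}{3} + 2\cdot\tfrac{\ell}{3}$ must be re-examined against double-counting of shared boundaries; the clean statement to prove is simply that a valid independent set of size $\tfrac{11\ell}{3}$ exists, and I would pin down the exact per-region tallies by drawing the picture in Figure~\ref{fig:hardisntancesketch} carefully.

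The main obstacle I anticipate is precisely the bookkeeping of which intervals are shared across region boundaries and making sure nothing is double-counted while still reaching the tight bound $\tfrac{11\ell}{3}$; the geometry is trivial (all disjointness checks are one-line numeric comparisons), but getting the \emph{exact constant} right requires a clean accounting scheme. I would handle this by fixing, once and for all, an assignment of every chosen interval to a unique region (boundary-crossing intervals of $A_3$ go to the good region they border, interior intervals to their own region), then proving: good regions contribute exactly $7$ each and bad regions contribute exactly $\tfrac{3}{2}$ on average — or more cleanly, group the line into blocks of three consecutive regions (one good, two bad) and show each block of length $3$ contributes exactly $11$ to the independent set, so that over $\ell/3$ blocks we get $\tfrac{11\ell}{3}$. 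Verifying the "$11$ per block" claim — listing the eleven intervals in one block and checking mutual disjointness, including across the block boundary with the next block — is the single computation I would actually carry out in full; everything else is routine.
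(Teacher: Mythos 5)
Your overall strategy (exhibit an explicit independent set of the right size) is the same as the paper's, but your proof as written does not close: every concrete tally you produce falls short of $\tfrac{11\ell}{3}$ (you get $2\ell$, then $\tfrac{7\ell}{3}$, then $3\ell$), and you end by deferring the actual computation (``I would pin down the exact per-region tallies by drawing the picture''). The gap is in the enumeration, and it is twofold. First, you never count the $A_2$-interval $[x+0.5,x+0.54]$ inside the \emph{good} regions; it is disjoint from all five $C_2$-intervals of that region (the paper's description of $C_2$ explicitly notes that $[x+0.55,x+0.6]$ was chosen to miss $[x+0.5,x+0.54]$), so each good region actually contributes $1+1+5=7$ intervals, not $5$ or $7$-with-borrowed-boundaries. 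Second, your step (3) only assigns the $A_3$-intervals at boundaries adjacent to a good region, i.e.\ $2\ell/3$ of them; the remaining $\ell/3$ intervals of $A_3$ sitting between two bad regions are never counted. Once both are included, each block of three regions (one good, two bad) contributes $7+2+2=11$, which is exactly the ``$11$ per block'' target you state at the end but do not verify.

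The paper's proof is the degenerate, bookkeeping-free version of this: the independent set is simply \emph{all} of $A_2\cup A_3\cup C_2$, with $|A_2|=\ell$, $|A_3|=\ell$, $|C_2|=5\ell/3$, summing to $11\ell/3$; the per-region and per-boundary assignment you were wrestling with is unnecessary because these three streams are pairwise and internally disjoint by construction. So the idea you were circling is correct and the disjointness checks you did carry out are right, but the proof is incomplete until the enumeration actually reaches $11\ell/3$.
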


\begin{proof}
Inspecting the intervals given by the streams $A_2,A_3,C_2$, we see that they form an independent set. We have
\begin{itemize}
\item $|A_2| = \ell$,
\item $|A_3| = \ell$, and
\item $ |C_2| = 5 \cdot \frac{\ell}{3}$.
\end{itemize}
Hence, we obtain that $|A_2 \cup A_3 \cup C_2| = \frac{11\ell}{3}$, which implies $|OPT| \ge |A_2 \cup A_3 \cup C_2| = \frac{11\ell}{3}$ as required.
\end{proof}
\begin{lemma}
The naive smooth histogram approach outputs a solution of size $\mathcal{CP}(BC) = \ell$.
\end{lemma}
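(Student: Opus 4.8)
The plan is to show that the naive smooth histogram solution — which is $\mathcal{CP}(BC)$, the continuation of run $\mathcal{CP}_1$ — has size exactly $\ell$. Since we already know $\mathcal{CP}(B) = \ell$ (all intervals of $B$ are pairwise independent, so each lands in its own region, and in fact $B$ has one region per good region $[x,x+1)$ together with the regions inherited from $A_1$), it suffices by monotonicity (Lemma~\ref{lem:monotonic}) to prove the upper bound $|\mathcal{CP}(BC)| \le \ell$. Equivalently, I want to argue that processing the stream $C$ on top of $B$ does not create any new regions.

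First I would recall that after processing $A$ and then $B$, the region partition consists of the $\ell$ regions $[x, x+1)$ for $x \in [\ell]$ plus the two unbounded regions $(-\infty, 2)$ and $[\ell, \infty)$, and that inside a good region $R = [x,x+1)$ the stored (leftmost/rightmost) intervals are $[x+0.5, x+0.54]$ (leftmost, inherited from $A_2$) and $[x+0.53, x+0.71]$ (rightmost, from $B$). In a non-good region no interval from $C$ arrives, so nothing changes there. So the whole argument reduces to a single good region. Then I would go through the intervals of $C_1$ and $C_2$ that fall inside $R$ one by one and verify that each one intersects both the current leftmost and the current rightmost stored interval of its (sub)region, so that by the $\mathcal{CP}$ update rule it only possibly replaces a stored interval but never triggers a region split. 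Concretely: the $C_1$ intervals $[x+0.06, x+0.3]$ and $[x+0.35, x+0.75]$ both meet $[x+0.5,x+0.54]$ and $[x+0.53,x+0.71]$? — here I need to be careful, because the instance description says $C_1$ \emph{does} create the three subregions $[x, x+0.3), [x+0.3, x+0.75), [x+0.75, x+1)$; so actually $\mathcal{CP}(ABC_1)$ inside $R$ has three regions, not one. The correct statement must therefore be that $\mathcal{CP}(BC)$ — run only on $B$ then $C$, not on $ABC$ — does not create new regions, because run $\mathcal{CP}_1$'s region partition at the start of $C$ is \emph{not} the fine partition; I would re-examine what region partition $\mathcal{CP}(B)$ has inside $[x,x+1)$ (a single region, since $B$ contributes one independent interval straddling plus one interval inside — wait, $B$ inserts $[x-0.1,x+0.26]$, $[x+0.53,x+0.71]$, $[x+0.9,x+1.1]$, and the first and third cross the boundary of $[x,x+1)$ so they are ignored by a run whose partition already has those boundaries; but $\mathcal{CP}(B)$ starts from $\{\mathbb{R}\}$, so for the \emph{run on $B$ alone} those boundaries don't exist yet). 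This is the crux I need to untangle.

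So the key step, and the main obstacle, is bookkeeping the region partition of the run $\mathcal{CP}_1$ (equivalently $\mathcal{CP}(B)$ and then its continuation into $C$) precisely, which is a \emph{different} partition from the one maintained by $\mathcal{CP}(A)$ or $\mathcal{CP}(AB)$. I expect that on $B$ alone, each of the $\ell$ pairwise-disjoint intervals of $B$ creates its own region (giving $\ell$ regions total, consistent with $\mathcal{CP}(B) = \ell$), and then I must check that every interval of $C$ lies within one of these $\ell$ regions and meets both its leftmost and rightmost stored intervals, so that no split occurs. I would do this case analysis per good region, tracking how the $C_1$ intervals reshape the leftmost/rightmost pointers and then confirming the $C_2$ intervals still intersect whatever is stored. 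Once no region of $\mathcal{CP}(BC)$ is split beyond the $\ell$ already present, Property~\textbf{C2} gives $|\mathcal{CP}(BC)| = \ell$. The routine interval-arithmetic verifications I would not grind through here; the delicate part is simply being disciplined about \emph{which} run's partition we are talking about at each step, since the instance is specifically engineered so that the fine subdivision into three pieces per good region happens for $\mathcal{CP}(AB)$-derived associated runs but \emph{not} for the primary run $\mathcal{CP}_1 = \mathcal{CP}(BC)$.
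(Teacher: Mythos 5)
You correctly identify the crux of this lemma --- that the relevant region partition is that of the run $\mathcal{CP}_1$, which starts from $\{\mathbb{R}\}$ at the beginning of $B$ and is therefore a \emph{different} (and coarser) partition than that of $\mathcal{CP}(AB)$ --- and your reduction of the claim to ``processing $C$ creates no new regions in this run,'' combined with $|\mathcal{CP}(B)|=\ell$ and monotonicity, is exactly the paper's strategy. However, your write-up stops precisely where the proof begins. The paper's argument consists of the verification you explicitly defer: it pins down the boundaries of $\mathcal{CP}(B)$'s partition inside each good region (at $x+0.26$, $x+0.71$ and $x+1.1$, i.e., determined by the three $B$-intervals), observes that both $C_1$ intervals as well as $[x+0.25,x+0.35]$ and $[x+0.7,x+0.8]$ from $C_2$ straddle these boundaries and are therefore discarded, and checks that the remaining three $C_2$ intervals per good region intersect the stored intervals of their regions and hence only update the leftmost/rightmost pointers without splitting. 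None of this bookkeeping appears in your proposal, so the claim is not actually established; the first half of your text is spent discovering (and then discarding) the wrong partition.

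Moreover, the one concrete prediction you make about how the verification will go --- that ``every interval of $C$ lies within one of these $\ell$ regions and meets both its leftmost and rightmost stored intervals, so that no split occurs'' --- is false: four of the seven $C$-intervals per good region cross a boundary of $\mathcal{CP}(B)$'s partition and are ignored for \emph{that} reason. This is harmless for the conclusion (a boundary-crossing interval also cannot trigger a split), but it is a different mechanism from the one you plan to check, and if you ran your intended case analysis literally you would find, e.g., that $[x+0.25,x+0.35]$ neither lies inside a region nor intersects the stored intervals you would be comparing it against. The instance is engineered so that the fine three-way subdivision per good region happens only in the associated runs seeded by $\mathcal{CP}(AB)$'s regions, never in $\mathcal{CP}_1$ itself; establishing this requires the explicit boundary arithmetic that your proposal omits.
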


\begin{proof}
Recall that all intervals in $B$ and $C$ are inserted only into good regions or at the boundary of good regions.
Let $[x,x+1)$ be a good region (i.e $x = 3k + 2$ for integer $k$).
We will show that $|\mathcal{CP}(BC) \cap [x-1,x+2]| = 3$ (the good region and its neighbouring regions).

After processing the $B$ stream, we have region boundaries at $x + 0.26$,  $ x+0.71$ and $x+1.1$.
Observe that all of the intervals of $C_1$ cross the region boundaries at $x + 0.26$, and $x+0.71$, so they do not get saved by the run of $\mathcal{CP}$.
Furthermore, we have that the interval $[x+0.25,x+0.35] \in C_2$ crosses the region boundary at $x+0.26$ while the interval $[x+0.7,x+0.8] \in C_2$ crosses the boundary at $x+0.71$, so these intervals also do not get saved.

When processing $C$, however, the intervals $[x + 0.06, x + 0.15], [x+0.55,x+0.6], [x+0.9, x+0.94] \in C_2$ get inserted into the solution.
Additionally, they do not change the structure of the regions created by the run $\mathcal{CP}(B)$ (i.e they only modify the leftmost or the rightmost interval of each region created by $B$).

So, $|\mathcal{CP}(BC) \cap [x-1, x+2]| = 3$.
Because there are $\ell/3$ good regions and the intervals $[x-1,x+2]$ where $x = 3k + 2$ do not pairwise intersect, we have $|\mathcal{CP}(BC)| = \ell$ as required.

\end{proof}
\begin{lemma}
Steps $1$ and $2$ of Algorithm \ref{alg:arbitrary-length} output a solution of size $\ell$.
\end{lemma}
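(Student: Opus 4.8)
The goal is to show that the additional runs created in Steps~1 and~2 of Algorithm~\ref{alg:arbitrary-length}, when combined as in the output rule, yield a solution of size exactly $\ell$ on the hard instance, which matches the naive smooth-histogram output and is therefore no better than $\ell$, giving the claimed $\frac{11}{3}$ ratio against $|OPT| \ge \frac{11\ell}{3}$. The plan is to analyse a single good region $R = [x,x+1)$ with $x = 3k+2$ (all other regions contribute nothing from $B \cup C$, so the whole computation localises), determine the regions that $\mathcal{CP}(AB)$ has created, and then track what each associated run can extract from the intervals of $C$ that subsequently fall into (or straddle) those regions.

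First I would recall from the previous lemma that, right when $\mathcal{CP}_0$ and $\mathcal{CP}_1$ became adjacent, the run $\mathcal{CP}(AB)$ has exactly the regions $[y,y+1)$ for $y \in [\ell]$ (plus the two unbounded ones), so inside $R$ there is a single region and its stored interval is $[x+0.53,x+0.71]$ (the rightmost was replaced by $B$). The associated Step~1 run for region $R$ then only sees the intervals of $C$ that lie within $[x,x+1)$: from $C_1$ these are $[x+0.06,x+0.3]$ and $[x+0.35,x+0.75]$, which are disjoint, and from $C_2$ the five intervals $[x+0.06,x+0.15],[x+0.25,x+0.35],[x+0.55,x+0.6],[x+0.7,x+0.8],[x+0.9,x+0.94]$. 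I would compute $x_i := |OPT_C \cap R_i|$ for this region — the $C_2$ intervals inside $R$ form an independent set of size $5$, so $x_i = 5$ — but the point is that the Step~1 run of $\mathcal{CP}$ on $R$ outputs only $\frac{x_i+1}{2}$-many intervals per sub-region it can carve out; here it creates at most $3$ sub-regions inside $R$ (the new boundaries being essentially at $x+0.3$ and $x+0.75$ from $C_1$), so it outputs at most $3$ intervals for $R$. For the Step~2 runs on merged pairs $R_i R_{i+1}$, I would observe that $C$ adds nothing on the boundary between two good regions (good regions are never adjacent since they are spaced $3$ apart, and between a good and a non-good region $C$ contributes nothing relevant), so these runs give no improvement over the individual ones. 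Summing over the $\ell/3$ good regions and the intervening regions (which yield $2$ each from the $A_3$-boundary structure, already counted in the naive $\ell$), the associated runs deliver a total of $3 \cdot \frac{\ell}{3} = \ell$, exactly matching the naive output.

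The main obstacle will be the bookkeeping of \emph{which} sub-regions $\mathcal{CP}$ actually creates inside a good region once the $C_1$ and then the $C_2$ intervals arrive in order, since the region-splitting rule of $\mathcal{CP}$ (Algorithm~\ref{alg:cabello}) depends delicately on whether an arriving interval is disjoint from the current leftmost/rightmost interval and on which side; the instance was engineered precisely so that the ``wrong'' $C_2$ intervals ($[x+0.25,x+0.35]$ and $[x+0.7,x+0.8]$) straddle the newly-created $C_1$-boundaries and are discarded, capping the per-region yield at $3$ rather than the $\frac{5+1}{2}=3$ one might hope to beat. I would verify carefully that after processing $C_1$ the boundaries inside $R$ are at $x+0.3$ and $x+0.75$ (so three sub-regions), and that among the five $C_2$ intervals exactly three survive as leftmost/rightmost updates without creating further splits, so the Step~1 run on $R$ reports exactly $3$ intervals. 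Once this local count is pinned down, the global statement $\ell$ follows by summing and noting the non-good regions and merged-pair runs contribute nothing beyond what is already accounted for.
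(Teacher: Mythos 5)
Your plan follows the paper's proof essentially verbatim: localize to the $\ell/3$ good regions (since only they receive intervals from $C$), observe that $C_1$ creates the sub-region boundaries of the associated runs, that $[x+0.25,x+0.35]$ and $[x+0.7,x+0.8]$ straddle these boundaries and are discarded while the remaining three $C_2$ intervals are stored, and conclude with $3\cdot\frac{\ell}{3}=\ell$. The only addition is that you explicitly argue the Step~2 merged-pair runs contribute nothing extra, a point the paper's proof leaves implicit.
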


\begin{proof}
First, observe that steps 1 and 2 of Algorithm \ref{alg:arbitrary-length} are run on substream $C$. Furthermore, since only good regions contain intervals in substream $C$, it suffices to explore how steps 1 and 2 act on good regions. 

In each good region, the stream $C_1$ is responsible for creating the regions of the runs of Algorithm \ref{alg:arbitrary-length}. Observe that the intervals $[x+0.25, x+0.35] \in C_2$ and  $[x + 0.7, x+0.8] \in C_2$ cross the boundaries of these regions and are thus not stored by the algorithm.
In each good region, only the intervals $[x + 0.06, x + 0.15], [x+0.55,x+0.6], [x+0.9, x+0.94]$ of  $C_2$   get memorized.
Therefore, we obtain a solution of size $3$ for each good region. 
Overall, the obtained solution by the runs of steps 1 and 2 of Algorithm \ref{alg:arbitrary-length} is of size $3 \cdot \frac{\ell}{3} = \ell$.

\end{proof}
Using the last two lemmas, we obtain the following conclusion:

\begin{theorem}
Let $S$ be the size of the solution output by Algorithm \ref{alg:arbitrary-length} on the described input.
Then,$\frac{OPT(ABC)}{S} \ge \frac{11}{3}$.
\end{theorem}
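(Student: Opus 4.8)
The plan is to assemble the four preceding lemmas of this appendix. For a maximization problem, the statement $\frac{OPT(ABC)}{S}\ge\frac{11}{3}$ amounts to two one‑sided estimates: a lower bound $|OPT(ABC)|\ge\frac{11\ell}{3}$ on the optimum, and an upper bound $S\le\ell$ on the size of the solution that Algorithm~\ref{alg:arbitrary-length} actually outputs on this instance. Once both are in hand the conclusion is immediate.

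For the lower bound on the optimum I would invoke Lemma~\ref{lemma:optsize}: the intervals of $A_2$, $A_3$ and $C_2$ are pairwise disjoint, and $|A_2|+|A_3|+|C_2|=\ell+\ell+5\cdot\frac{\ell}{3}=\frac{11\ell}{3}$, so $|OPT(ABC)|\ge\frac{11\ell}{3}$. For the upper bound on $S$, recall (see the proof of Theorem~\ref{thm:ub-arbitrary-length}) that Algorithm~\ref{alg:arbitrary-length} reports the larger of two solutions: the output of the oldest surviving smooth‑histogram run, which on this instance is $\mathcal{CP}(BC)$, and the solution stitched together from that run's associated Step~1 and Step~2 runs in the manner of Lemmas~\ref{lem:case-1} and~\ref{lem:case-2}. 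By the lemma computing the naive smooth‑histogram output, $|\mathcal{CP}(BC)|=\ell$, and by the lemma analysing Steps~1 and~2, the associated runs also yield a solution of size $\ell$; since the algorithm can only output something built from these stored runs, and all of them have size $\ell$, we get $S=\ell$. Plugging in,
$$\frac{OPT(ABC)}{S}\ \ge\ \frac{11\ell/3}{\ell}\ =\ \frac{11}{3}\ ,$$
which is the claim.

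The step I expect to require the most care is the upper bound $S\le\ell$: one must check that Step~4 of Algorithm~\ref{alg:arbitrary-length} has no way of combining the surviving run, its per‑region runs, and its merged‑pair runs into an independent set of size larger than $\ell$. Since on this instance every interval of $B$ and $C$ lies inside, or straddles the boundary of, one of the $\ell/3$ good regions, this reduces to the local verification that, restricted to a good region together with its two neighbouring regions, any such stitched solution has size at most $3$ — exactly the content of the last two lemmas. One should also record that the simplifying assumptions $\beta=0$ and $3\mid\ell$ are harmless: the former only makes the guaranteed smooth‑histogram approximation tend to $\frac{11}{3}$ in the limit, which is precisely the regime in which tightness is asserted, and the latter merely lets the good regions tile cleanly so that the counts $\ell/3$ and $5\ell/3$ are exact.
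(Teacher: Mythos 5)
Your proposal is correct and follows essentially the same route as the paper: invoke Lemma~\ref{lemma:optsize} for $|OPT(ABC)|\ge\frac{11\ell}{3}$ and the last two lemmas for the two candidate solutions of size $\ell$. The only point where the paper is more specific is the justification that combining the two solutions in Step~4 cannot exceed $\ell$: instead of your local per-good-region verification, it simply observes that the intervals saved by the Step~1 and Step~2 runs form a subset of those saved by $\mathcal{CP}(BC)$, so the combined solution is just $\mathcal{CP}(BC)$'s output and $S=\ell$.
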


\begin{proof}
By the previous two lemmas, both $\mathcal{CP}(BC)$ and steps 1 and 2 of Algorithm \ref{alg:arbitrary-length} output a solution of size $\ell$.
Notice that the set of saved intervals of steps 1 and 2 of Algorithm \ref{alg:arbitrary-length} is a subset of the saved intervals of $\mathcal{CP}(BC)$, therefore we cannot improve the overall solution by combining both solutions.
So, $S = \ell$.

By Lemma \ref{lemma:optsize}, $OPT(ABC) \ge \frac{11\ell}{3}$. 
So, we get the required conclusion.
\end{proof}

\end{document}